\documentclass[a4paper,USenglish,cleveref, autoref, thm-restate,pdfa]{lipics-v2021}

\usepackage{tikz}
\usetikzlibrary{shapes, positioning}

\pdfoutput=1 
\hideLIPIcs  

\title{Complexity of Finding and Enumerating Interconnection Trees} 

\author{No\'e Demange}{DAVID lab, UVSQ - University Paris-Saclay, France}{noe.demange@uvsq.fr}{https://orcid.org/0009-0007-9438-5552}{}

\author{Yann Strozecki}{DAVID lab, UVSQ - University Paris-Saclay, France}{yann.strozecki@uvsq.fr}{https://orcid.org/0000-0002-0891-3766}{}

\authorrunning{N. Demange and Y. Strozecki} 

\Copyright{No\'e Demange and Yann Strozecki} 

\ccsdesc[500]{Theory of computation~Graph algorithms analysis}

\keywords{Complexity, Enumeration, Parametrized Complexity, Interconnection Tree, Multipartite Graph} 

\category{} 

\relatedversion{} 

\supplementdetails[linktext={\\https://github.com/NoeDemange/InterconnectionTreesEnumeration/}, subcategory={Source code}, swhid={swh:1:rev:d63eb0f2de4110e5067a7bebdf5061091fc5e49e}]{Software}{https://github.com/NoeDemange/InterconnectionTreesEnumeration/} 

\acknowledgements{We thank Arnaud Mary for sharing his ideas on the parametrized complexity of the problem.}

\nolinenumbers 

\EventEditors{John Q. Open and Joan R. Access}
\EventNoEds{2}
\EventLongTitle{42nd Conference on Very Important Topics (CVIT 2016)}
\EventShortTitle{CVIT 2016}
\EventAcronym{CVIT}
\EventYear{2016}
\EventDate{December 24--27, 2016}
\EventLocation{Little Whinging, United Kingdom}
\EventLogo{}
\SeriesVolume{42}
\ArticleNo{23}

\usepackage[ruled,vlined]{algorithm2e}
\usepackage{complexity}

\newtheorem*{theorem*}{Theorem}
\newtheorem*{lemma*}{Lemma}

\DeclareMathOperator{\partition}{Part}
\newcommand{\ITP}{\textsc{ITP}}
\newcommand{\CITP}{\textsc{\#ITP}}
\newcommand{\EITP}{\textsc{EnumITP}}

\DeclareMathOperator{\IT}{IT}
\newcommand{\ITof}[1]{\IT\!\left(#1\right)}

\newcommand{\T}{\rule{0pt}{2ex}}

\newcommand{\mcomment}[2]{\ifmmode\margincomment{#1}{#2}\else\footcomment{#1}{#2}\fi}
\newcommand{\footcomment}[2]{{\color{blue}\textbf{(#1)}}\footnote{\textbf{#1:} #2}}
\newcommand{\margincomment}[2]{{\color{blue}\textbf{(#1)}}\footnotemark\marginnote{\tiny\textsuperscript{\thefootnote}\textbf{#1:} #2}}

\begin{document}

\maketitle

\begin{abstract}
We study the problem of connecting the parts of a multipartite graph using a minimum number of edges under a matching constraint. We introduce \emph{interconnection trees}, defined as matchings whose projections onto the quotient graph form a spanning tree. Motivated by applications in chemoinformatics, we investigate the decision, counting, and enumeration variants of this problem.

We show that the decision problem is $\NP$-complete. Nevertheless, it becomes tractable in several structured settings: it is fixed-parameter tractable in the number of parts, and admits polynomial or linear-time algorithms on complete, quasi-complete, and $t$-quasi-complete multipartite graphs. We also study enumeration, for which we design efficient flashlight-search based algorithms with optimal delay for complete multipartite graphs, and a weight-guided heuristic that prioritizes low-weight solutions and performs well in practice.
\end{abstract}

\newpage
\section{Introduction}
\label{sec:Introduction}
We study a graph-theoretic problem arising from a practical question in computational chemistry~\cite{bricage2018,demange2026computational}. In the design of molecular cages, one first places several binding patterns --- small molecular fragments positioned to interact with a target molecule. These patterns must then be connected by molecular paths into a single covalent structure, without reusing atoms and without introducing redundant connections. Among all feasible constructions, those minimizing the total length of the molecular path are preferred, as this quantity correlates with both the physical size and the synthetic feasibility of the cage.

We model this setting using a multipartite graph whose parts represent binding patterns, whose vertices represent candidate junction atoms, and whose edges represent possible connections by a molecular path between atoms in distinct patterns. We abstract away the construction of the molecular path, studied in detail in~\cite{demange2026computational} and focus only on the first step of deciding how to connect all parts using a set of vertex-disjoint edges that induces a spanning tree at the level of parts. We call such a set an \emph{interconnection tree}: a matching whose projection onto the quotient graph of parts is a spanning tree. 

Although motivated by cheminformatics, this abstraction captures a broader class of connection problems under compatibility and resource constraints. For instance, in network design, servers may expose multiple communication ports, but only certain pairs of ports are mutually compatible. Selecting a set of disjoint links that connects all servers corresponds exactly to finding an interconnection tree.

From an algorithmic perspective, two aspects make this problem nonstandard. First, it combines two classical structures --- matchings and spanning trees --- under a coupling constraint induced by the multipartite structure. Second, in the motivating applications, computing a single feasible solution is often insufficient. Geometric or physical constraints may invalidate a candidate construction, and one must be able to explore many alternatives. This naturally leads to the study of the problem of enumerating all interconnection trees.

In enumeration problems, the relevant complexity measure is not only the total running time, but also the worst-case \emph{delay} between two consecutive outputs. When all solutions can be generated, one also considers the \emph{average delay}, defined as the total running time divided by the number of solutions (see~\cite{johnson1988generating,phd} for more details on enumeration complexity). Classical techniques achieving polynomial or even constant amortized delay include Gray codes, reverse search~\cite{avis1996reverse}, and the flashlight-search technique~\cite{boros2009generating,capelli2021enumerating}. The latter provides a general framework for enumeration based on recursive partitioning of the solution space combined with efficient feasibility tests.

A further challenge arises from edge weights: in practice, one is primarily interested in low-weight solutions. Generating combinatorial objects in nondecreasing weight order is substantially harder than unordered enumeration, and for many problems such ordered generation remains difficult; see~\cite{eppstein2015k,phd} and the references therein.

\subparagraph*{Related works.}
Complete multipartite graphs frequently appear in graph theory as extremal examples, such as Turán graphs (balanced complete multipartite graphs), which characterize edge-maximal graphs without large cliques~\cite{turan1941}. Their symmetry also enables closed-form counting of otherwise difficult combinatorial structures, such as spanning trees~\cite{moon1967}.

Spanning tree and matching problems are classical problems solvable in polynomial time. Minimum spanning trees can be computed in near-linear time via Kruskal’s or Prim’s algorithm~\cite{kruskal1956shortest,prim1957shortest}, and maximum matchings in general graphs can be found in polynomial time using Edmonds’ blossom algorithm~\cite{edmonds1965paths}. However, many variants of these problems, such as the Steiner tree problem~\cite{Karp1972}, degree-constrained spanning tree or matching~\cite{garey1979computers} are $\NP$-hard.

Enumeration of spanning trees has a long history, with algorithms achieving polynomial delay~\cite{gabow1978finding} even in increasing weight order, constant amortized time per solution~\cite{kapoor1995}, and Gray code orderings~\cite{merino2022all}. Enumerating matchings can also be achieved with constant amortized delay~\cite{uno2015constant}, and related techniques will be used in this work. Moreover, weighted matchings in bipartite graphs admit ranking algorithms such as Murty’s method~\cite{murty1968algorithm}, which produce solutions in increasing order of cost.

\subparagraph*{Main contributions.}
In \cref{ITProb}, we show that \ITP, the problem of deciding whether an interconnection tree exists, is $\NP$-complete. Despite this hardness, we prove that \ITP\ is fixed-parameter tractable with respect to the number of parts. We then identify several tractable subclasses. For complete multipartite graphs, we derive a structural characterization together with a closed-form formula for the number of interconnection trees. We extend these results to quasi-complete and $t$-quasi-complete multipartite graphs, where the problem is in $\P$ or in $\XP$ when parameterized by $t$.

In \cref{ICTEnum}, we study \EITP, the problem of enumerating all interconnection trees. We design flashlight-search algorithms and show that one of them achieves optimal delay and amortized delay. Finally, in \cref{WICTEnum}, we introduce a weight-guided enumeration heuristic that tends to produce low-weight solutions early, and we evaluate its practical behavior on synthetic and real molecular instances.

\section{Definitions}
\label{sec:Definitions}

A graph $G=(V,E)$ is \textbf{multipartite} if its vertex set can be partitioned into
parts $V_1,\dots,V_k$ such that each $V_i$ is non-empty and induces an independent set of $G$.
Let $k$ be the number of parts in a multipartite graph and let $[k]=\{1,\dots,k\}$. We define the function $\partition : V \to [k]$ by $\partition(v)=i$ if and only if $v\in V_i$. Whenever we consider a multipartite graph, we assume that it is given together with the function $\partition$.

The chemical setting we model motivates multipartite graphs in which all edges between distinct parts are present. A \textbf{complete multipartite graph} is a multipartite graph
$G=(V=V_1\sqcup\dots\sqcup V_k,E)$ such that for all $i\neq j$ in $[k]$, if $u\in V_i$ and $v\in V_j$, then $(u,v)\in E$.
We relax this notion slightly by not enforcing completeness for the first $t$ parts.

\begin{definition}
A \textbf{$t$-quasi-complete multipartite graph} is a multipartite graph
$G=(V=V_1\sqcup\dots\sqcup V_k,E)$ such that for all $i\neq j$,
if $i>t$ and $j>t$, then for all $u\in V_i$ and $v\in V_j$, we have $(u,v)\in E$.
\end{definition}

The most common cases considered in this article are $t=0$, corresponding to complete multipartite graphs, and $t=1$, for which we say that the graph is \textbf{quasi-complete}. In a $t$-quasi-complete multipartite graph, we call the parts $V_i$ with $i>t$ the \textbf{complete} parts.

We extend the function $\partition$ to edges and sets of edges by defining
$\partition((u,v)) = (\partition(u),\partition(v))$
and
$ \partition(E)=\{\partition(e)\mid e\in E\}.$
To represent the connections between the parts of a multipartite graph, we introduce the quotient graph, as illustrated in \cref{fig:sol_motLiant_ACANIL}.

\begin{figure}[t]
    \centering
    \begin{subfigure}[t]{0.49\textwidth}
    \centering
    \resizebox{0.5\linewidth}{!}{
        \begin{tikzpicture}[scale=0.8]
        \node[draw, circle, color=blue, minimum size=2cm] (C1) at (0,4) {};
        \node[anchor=center, color=blue, color=blue] (C1label) at (0, 3.2) {$V_1$};
        \node[draw, circle, minimum size=0.8cm, inner sep=0pt] (1) at (-0.5, 4) {1};
        \node[draw, circle, minimum size=0.8cm, inner sep=0pt] (2) at (0.5, 4) {2};
    
        \node[draw, circle, color=blue, minimum size=2cm] (C2) at (5,4) {};
        \node[anchor=center, color=blue] (C2label) at (5, 3.1) {$V_2$};
        \node[draw, circle, minimum size=0.8cm, inner sep=0pt] (3) at (4.5, 4) {3};
        \node[draw, circle, minimum size=0.8cm, inner sep=0pt] (4) at (5.5, 4) {4};
    
        \node[draw, circle, color=blue, minimum size=3cm] (C3) at (0,0) {};
        \node[anchor=center, color=blue] (C3label) at (0, -1.6) {$V_3$};
        \node[draw, circle, minimum size=0.8cm, inner sep=0pt] (5) at (-0.5, 0) {5};
        \node[draw, circle, minimum size=0.8cm, inner sep=0pt] (6) at (-0.5, 1) {6};
        \node[draw, circle, minimum size=0.8cm, inner sep=0pt] (7) at (0.5, 1) {7};
        \node[draw, circle, minimum size=0.8cm, inner sep=0pt] (8) at (0.5, 0) {8};
        \node[draw, circle, minimum size=0.8cm, inner sep=0pt] (9) at (0.5, -1) {9};
        \node[draw, circle, minimum size=0.8cm, inner sep=0pt] (10) at (-0.5, -1) {10};
    
        \node[draw, circle, color=blue, minimum size=3cm] (C4) at (5,0) {};
        \node[anchor=center, color=blue] (C4label) at (5, -1.6) {$V_4$};
        \node[draw, circle, minimum size=0.8cm, inner sep=0pt] (11) at (4.5, 0) {11};
        \node[draw, circle, minimum size=0.8cm, inner sep=0pt] (12) at (4.5, 1) {12};
        \node[draw, circle, minimum size=0.8cm, inner sep=0pt] (13) at (5.5, 1) {13};
        \node[draw, circle, minimum size=0.8cm, inner sep=0pt] (14) at (5.5, 0) {14};
        \node[draw, circle, minimum size=0.8cm, inner sep=0pt] (15) at (5.5, -1) {15};
        \node[draw, circle, minimum size=0.8cm, inner sep=0pt] (16) at (4.5, -1) {16};

        \draw[color = red] (1) -- (6);
        \draw[color = black] (1) -- (7);
        \draw[color = black, bend right=80] (1) edge (5);
        \draw[color = black, bend left=80] (1) edge (4);
        \draw[color = black, bend left=80] (1) edge (3);
        \draw[color = red] (2) -- (3);
        \draw[color = black] (3) -- (7);
        \draw[color = black] (3) -- (8);
        \draw[color = black] (3) -- (12);
        \draw[color = black] (4) -- (8);
        \draw[color = black] (4) -- (13);
        \draw[color = black, bend left=80] (4) edge (14);
        \draw[color = red] (7) -- (11);
        \draw[color = black] (8) -- (12);
        \draw[color = black] (9) -- (16);
        \draw[color = black, bend right=80] (10) edge (15);

        \end{tikzpicture}
        }
        \label{fig:sol1_inter_motLiant_ACANIL}
    \end{subfigure}
    \begin{subfigure}[t]{0.49\textwidth}
    \centering
    \resizebox{0.5\linewidth}{!}{
        \begin{tikzpicture}[scale=0.8]
        \node[draw, circle, color=blue, minimum size=2cm] (V1) at (0,4) {$V_1$};
        \node[draw, circle, color=blue, minimum size=2cm] (V2) at (5,4) {$V_2$};

        \node[draw, circle, color=blue, minimum size=3cm] (V3) at (0,0) {$V_3$};
        \node[draw, circle, color=blue, minimum size=3cm] (V4) at (5,0) {$V_4$};
       
        \draw[color = red] (V1) -- (V3);
        \draw[color = red] (V1) -- (V2);
        \draw[color = red] (V3) -- (V4);
        \draw[color = black] (V2) -- (V4);
        \draw[color = black] (V2) -- (V3);

        \end{tikzpicture}
        }
        \label{fig:Gpart_with_IT}
    \end{subfigure}
    \caption{A multipartite graph $G$ (left) and the corresponding quotient graph $G_{\partition}$ (right). In red, edges of an interconnection tree $T = \{(1,6), (2,3), (7,11)\}$. }
    \label{fig:sol_motLiant_ACANIL}
\end{figure}

\begin{definition}
Given a multipartite graph $G=(V=V_1\sqcup\dots\sqcup V_k,E)$,
the \textbf{quotient graph} is defined as
$ G_{\partition}=([k],\partition(E)).$
\end{definition}

We are interested in connecting all parts of a multipartite graph
using a minimum number of edges without reusing vertices.
This motivates the notion of an interconnection tree, which is the main object studied in this article.

\begin{definition}
An \textbf{interconnection tree} of a multipartite graph
$G=(V=V_1\sqcup\dots\sqcup V_k,E)$ is a set $T\subseteq E$ such that
\begin{enumerate}
    \item $T$ is a matching in $G$;
    \item $\partition(T)$ is a spanning tree of $G_{\partition}$.
\end{enumerate}
\end{definition}

We denote by $\ITof{G}$ the set of interconnection trees of a multipartite graph $G$. Given $G$, we consider the problem of listing all elements of $\ITof{G}$, which we call \EITP\ for Enumeration of Interconnection Trees Problem. Efficient enumeration of these objects first requires solving the associated decision problem \ITP, which asks, given $G$, whether $\ITof{G}$ is empty. We also consider the counting variant of the problem, denoted by \CITP, which consists in computing $|\ITof{G}|$.

\section{Interconnection Tree Problem}
\label{ITProb}
In this section, we study the complexity of \ITP, first for general multipartite graphs. We then focus on complete and quasi-complete multipartite graphs, which are relevant for applications and for which \ITP\ is tractable.

\subsection{Complexity of \ITP}

Like many variants of the spanning tree problem, \ITP\ is hard by reduction from the Hamiltonian Path problem.

\begin{theorem} \label{th:ITisNP}
\ITP\ is $\NP$-complete.
\end{theorem}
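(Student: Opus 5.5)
Membership in $\NP$ is immediate: a certificate is a set $T\subseteq E$. Checking that $T$ is a matching takes linear time. Checking that $\partition(T)$ is a spanning tree of $G_{\partition}$ amounts to checking that $|T|=k-1$, that the $k-1$ pairs $\partition(e)$ for $e\in T$ are pairwise distinct, and that they connect $[k]$. Each of these is a polynomial-time test, for instance with a union-find structure.

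For hardness, as announced, we reduce from Hamiltonian Path. The key idea is that a vertex of the original graph which may use at most two of its "ports" forces every part to have degree at most $2$ in the quotient tree, and a spanning tree with maximum degree $2$ is exactly a Hamiltonian path.

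Let $H=(W,F)$ be an instance of Hamiltonian Path with $W=\{w_1,\dots,w_k\}$. We build a multipartite graph $G$ with one part $V_i$ per vertex $w_i$. For each edge $\{w_i,w_j\}\in F$, we create a fresh vertex $x_{ij}\in V_i$ and a fresh vertex $x_{ji}\in V_j$, and add the single edge $(x_{ij},x_{ji})$. This alone does not bound degrees, since each part has one port per incident edge. To enforce degree at most $2$, I would instead give each $V_i$ exactly two vertices $a_i,b_i$, and for every edge $\{w_i,w_j\}\in F$ connect both $a_i,b_i$ to both $a_j,b_j$. Each part is non-empty and independent, and the quotient graph is $G_{\partition}=H$.

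The correctness argument then has two directions.
\begin{itemize}
\item If $T$ is an interconnection tree, each vertex of $V_i$ is covered at most once by the matching $T$, so part $i$ has degree at most $|V_i|=2$ in $\partition(T)$. A spanning tree of $H$ with maximum degree $2$ is a Hamiltonian path.
\item Conversely, given a Hamiltonian path $w_{\sigma(1)},\dots,w_{\sigma(k)}$, we choose for the $\ell$-th path edge the edge $(b_{\sigma(\ell)},a_{\sigma(\ell+1)})$. Each $a$ and each $b$ is then used at most once, so we obtain a matching whose projection is the path, hence a spanning tree of $G_{\partition}$.
\end{itemize}
The construction is clearly polynomial, having $2k$ vertices and $4|F|$ edges.

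The main subtlety, and the step I would verify most carefully, is that the projection of $T$ must consist of distinct quotient edges forming a tree. In the backward direction, distinctness holds because the path edges are distinct. In the forward direction, the degree bound relies only on the matching property. One also has to handle the degenerate case $k=1$, where both sides are trivially yes-instances.
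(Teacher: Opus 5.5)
Your proof is correct and follows the paper's approach: a reduction from Hamiltonian Path with two vertices per part, so the matching condition caps the degree of every part in $\partition(T)$ at two and forces the spanning tree of $G_{\partition}=H$ to be a Hamiltonian path. The only difference is that you join adjacent parts by a full $K_{2,2}$ instead of the paper's $u^{\text{out}}$--$v^{\text{in}}$ edges; this is equally valid for the decision problem, and only makes the correspondence with Hamiltonian paths $2^k$-to-one rather than the exact one the paper reuses for its $\#\P$ remark.
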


\begin{proof}
 Given a set $T \subseteq E$, we can check in linear time
that $T$ is a matching of $G$ and that $\partition(T)$ is a spanning tree of $G_{\partition}$. Hence, $\ITP \in \NP$.

We prove $\NP$-hardness by reduction from the Hamiltonian Path problem~\cite{DBLP:conf/coco/Karp72}.
Let $H$ be a graph, instance of Hamiltonian Path. We construct a multipartite graph $G$ as follows. For each $u \in V(H)$, define a part
$V_u = \{u^{\text{in}}, u^{\text{out}}\}$.
Let
\(
V = \bigsqcup_{u \in V(H)} V_u \)
and
\(
E = \{(u^{\text{out}}, v^{\text{in}}) \mid (u,v) \in E(H)\}.
\)
The parts of $G$ are the sets $V_u$. This construction is linear in the size of $H$ and is illustrated in \cref{fig:NP_constrution}.

Assume that $H$ has a Hamiltonian path
$P = u_1, u_2, \dots, u_n$. Define $
T = \{(u_i^{\text{out}}, u_{i+1}^{\text{in}}) \mid 1 \le i \le n-1\}$.
Since $P$ visits each vertex exactly once, the edges of $T$ have pairwise distinct
endpoints, hence $T$ is a matching of $G$.
Moreover, each edge of $\partition(T)$ is of the form
$(\partition(u_i^{\text{out}}), \partition(u_{i+1}^{\text{in}}))$,
so $\partition(T)$ is a path in the quotient graph $G_{\partition}$.
Because $P$ is Hamiltonian, this path visits all vertices of $G_{\partition}$.
Hence, $\partition(T)$ is a spanning tree of $G_{\partition}$,
and $T$ is an interconnection tree of $G$.

Conversely, assume that $G$ has an interconnection tree $T$.
Then $T$ is a matching such that $\partition(T)$ is a spanning tree of $G_{\partition}$.
Each part $V_u$ contains exactly two vertices, so at most two edges of $T$
can be incident to vertices of the same part.
It follows that every vertex of $\partition(T)$ has degree at most two.
Since $\partition(T)$ is a spanning tree, it must be a path.

We can order the edges of $T$ in path order as
$(u_i^{\text{out}}, u_{i+1}^{\text{in}})$ for $i=1,\dots,n-1$.
By construction of $G$, if
$(u_i^{\text{out}}, u_{i+1}^{\text{in}}) \in E(G)$,
then $(u_i, u_{i+1}) \in E(H)$.
Hence, $u_1, \dots, u_n$ is a Hamiltonian path of $H$.
\end{proof}

\begin{figure}[t]
    \centering
    \begin{subfigure}[t]{0.49\textwidth}
    \centering
    \resizebox{0.6\linewidth}{!}{
        \begin{tikzpicture}[scale=0.8]
        \node[circle,draw] (a) at (1,-2) {$u_1$};
        \node[circle,draw] (b) at (0,0) {$u_2$};
        \node[circle,draw] (c) at (2,0) {$u_3$};
        \node[circle,draw] (d) at (4,0) {$u_4$};
        \draw[color=red] (a)--(b);
        \draw (a)--(c);
        \draw[color=red] (b)--(c);
        \draw[color=red] (c)--(d);
        \end{tikzpicture}
    }
         \label{fig:NP_H}
    \end{subfigure}
    \begin{subfigure}[t]{0.49\textwidth}
    \centering
    \resizebox{0.6\linewidth}{!}{
            \begin{tikzpicture}[scale=0.8]
            \node[draw, circle, color=blue, minimum size=2cm] (C1) at (1.5,-3) {};
            \node[anchor=center, color=blue, color=blue] (C1label) at (1.5, -3.8) {$V_1$};
            \node[draw, circle, minimum size=0.8cm, inner sep=0pt] (1) at (1, -3) {$u_1^{in}$};
            \node[draw, circle, minimum size=0.8cm, inner sep=0pt] (2) at (2, -3) {$u_1^{out}$};
            \node[draw, circle, color=blue, minimum size=2cm] (C2) at (0,0) {};
            \node[anchor=center, color=blue, color=blue] (C2label) at (0, 0.8) {$V_2$};
            \node[draw, circle, minimum size=0.8cm, inner sep=0pt] (3) at (-0.5, 0) {$u_2^{in}$};
            \node[draw, circle, minimum size=0.8cm, inner sep=0pt] (4) at (0.5, 0) {$u_2^{out}$};
        
            \node[draw, circle, color=blue, minimum size=2cm] (C3) at (3,0) {};
            \node[anchor=center, color=blue] (C3label) at (3, 0.8) {$V_3$};
            \node[draw, circle, minimum size=0.8cm, inner sep=0pt] (5) at (2.5, 0) {$u_3^{in}$};
            \node[draw, circle, minimum size=0.8cm, inner sep=0pt] (6) at (3.5, 0) {$u_3^{out}$};
        
            \node[draw, circle, color=blue, minimum size=2cm] (C4) at (6,0) {};
            \node[anchor=center, color=blue] (C4label) at (6, -0.8) {$V_4$};
            \node[draw, circle, minimum size=0.8cm, inner sep=0pt] (7) at (5.5, 0) {$u_4^{in}$};
            \node[draw, circle, minimum size=0.8cm, inner sep=0pt] (8) at (6.5, 0) {$u_4^{out}$};
    
            \draw[color = red] (4) -- (5);
            \draw[color = red] (6) -- (7);
            \draw (1) -- (4);
            \draw[color = red] (2) -- (3);
            \draw (1) -- (6);
            \draw (2) -- (5);
            \draw (1) -- (4);
            \draw[bend left=40] (3) edge (6);
            \draw[bend left=40] (5) edge (8);
           
            \end{tikzpicture}
        }
        \label{fig:NP_G}
    \end{subfigure}
    \caption{Construction of the graph $G$ (right) from a graph $H$ (left). In red, the Hamiltonian path and the interconnection tree.}
    \label{fig:NP_constrution}
\end{figure}

The reduction given in the proof of \cref{th:ITisNP} also proves that \CITP\ is $\#\P$-complete, since it gives a bijection between the interconnection trees and the Hamiltonian paths, and counting Hamiltonian paths is $\#\P$-hard~\cite{valiant1979complexity}.

Since \ITP\ is $\NP$-hard, we now study its parameterized complexity. A natural parameter is the number of parts $k$. We show that the problem is fixed-parameter tractable with respect to $k$, by considering all possible spanning trees of the reduced graph and applying a kernelization to the multipartite graph for each of these trees. 

\begin{theorem}
\ITP\ can be solved in time
$O(m\sqrt{n} + 2^k k^{3k})$ on a multipartite graph with $n$ vertices,
$m$ edges, and $k$ parts.
\end{theorem}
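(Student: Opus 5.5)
The plan is to guess the spanning tree of $G_{\partition}$ that $\partition(T)$ should realize, and then to decide, for that fixed tree, whether its $k-1$ quotient edges can be realized by pairwise vertex-disjoint edges of $G$. The fixed-tree subproblem is solved on a kernel of size polynomial in $k$, built once in a preprocessing phase that costs $O(m\sqrt{n})$. There are at most $k^{k-2}$ spanning trees of $G_{\partition}$. Once the kernel is built, each tree is handled in time roughly $2^k k^{2k}$, which gives the $2^k k^{3k}$ term.

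\emph{Preprocessing.} For every pair $i<j$ with $(i,j)\in\partition(E)$, let $G_{ij}$ be the bipartite graph of edges between $V_i$ and $V_j$. We run Hopcroft--Karp on each $G_{ij}$. The graphs $G_{ij}$ partition $E$, so the total cost is $O(m\sqrt{n})$. We stop the matching as soon as its size reaches $2k-3$.

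\emph{Large pairs.} Call a pair \emph{large} if $G_{ij}$ has a matching $M_{ij}$ of size at least $2k-3$. Let $T$ be any candidate solution, and consider the edges of $T$ other than the one realizing $(i,j)$. There are at most $k-2$ of them, so they cover at most $2(k-2)=2k-4$ vertices. Each covered vertex blocks at most one edge of $M_{ij}$, so some edge of $M_{ij}$ remains vertex-disjoint from them. Hence a large tree edge can always be realized at the very end, whatever choices were made for the other edges. It suffices to keep $M_{ij}$ as the kernel for large pairs, and in the search to treat large tree edges last by a greedy pick from $M_{ij}$.

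\emph{Small pairs.} For the remaining \emph{small} pairs, the maximum matching has size below $2k-3$. By König's theorem, a vertex cover $C_{ij}$ of size less than $2k-3$ can be read off from the Hopcroft--Karp output in linear time. For each $u\in C_{ij}$, we keep only $2k-3$ arbitrary edges of $G_{ij}$ incident to $u$, or all of them if there are fewer. The result is a kernel with $O(k^2)$ edges per pair.

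\emph{Safety of the small-pair reduction.} This is the main obstacle; I would argue it by exchange. Let $T$ be a solution whose realization of $(i,j)$ is an edge $uv\notin$ kernel. Since $C_{ij}$ is a vertex cover, one endpoint, say $u$, lies in $C_{ij}$. Because $uv$ was discarded, $u$ has at least $2k-3$ kept incident edges $uv'$. The other edges of $T$ cover at most $2k-4$ vertices, so some kept edge $uv'$ has $v'$ unused by them. Replacing $uv$ by $uv'$ keeps $T$ a matching and leaves $\partition(T)$ unchanged. Repeating this for every tree edge yields a solution inside the kernel.

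\emph{Per-tree search.} For a fixed tree, we first enumerate the choices for the small tree edges among the $O(k^2)$ kernel edges of each. This gives at most $(ck^2)^{k-1}$ combinations, which is within $2^k k^{2k}$ for a suitable bookkeeping. For each combination we check disjointness, then complete the large edges greedily in $O(k^2)$ time. I expect fitting the constants exactly to $2^k k^{3k}$ to need some care. Summing over the $k^{k-2}$ trees then gives the stated bound.
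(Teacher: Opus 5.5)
Your strategy is the paper's: guess the spanning tree $S$ of $G_{\partition}$, set aside tree edges whose bipartite graph has a large matching, kernelize the remaining pairs via a K\H{o}nig vertex cover with a bounded number of kept edges per cover vertex, justify the kernel by exchange, and brute-force. For the large pairs you complete greedily at the end, while the paper splits the instance along them; these amount to the same thing. Your correctness arguments are sound.

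The gap is in the final count: with your thresholds it does not give the stated bound. A small pair can have a cover of size $2k-4$, and each cover vertex keeps $2k-3$ edges. So a small tree edge has up to $(2k-4)(2k-3)\approx 4k^2$ candidates. Your search then costs up to about $4^{k-1}k^{2k-2}$ per tree, hence $4^{k-1}k^{3k-4}$ in total. This is not $O(2^k k^{3k})$, because the ratio is $2^{k-2}/k^4$, which is unbounded. So the claim that $(ck^2)^{k-1}$ fits in $2^k k^{2k}$ ``for a suitable bookkeeping'' is false for the constant $c\approx 4$ your kernel produces. No bookkeeping repairs it; your thresholds are simply too coarse.

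The missing idea is to count conflicts locally, as the paper does. An edge of $G_{ij}$ can only be blocked by vertices of $V_i\cup V_j$. The other edges of $T$ use at most $(d_i-1)+(d_j-1)$ such vertices, where $d_i,d_j$ are degrees in $S$. Since $i,j$ are adjacent in a tree on $k$ vertices, $d_i+d_j\le k$, so at most $k-2$ vertices are blocked. This has three consequences:
\begin{itemize}
\item A matching of size $k-1$ already suffices for your greedy completion; this is exactly the paper's decomposition claim.
\item Small pairs therefore have covers of size at most $k-2$.
\item In your exchange step, if the cover endpoint lies in $V_i$, only the at most $d_j-1\le k-2$ used vertices of $V_j$ matter. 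So keeping $k-1$ edges per cover vertex is enough.
\end{itemize}
Each small pair then has fewer than $k^2$ kernel edges, giving at most $k^{2k-2}$ combinations per tree. The total is $O(k^{3k-2})$ even with $O(k^2)$ work per combination, well within $2^k k^{3k}$. You do not even need the greedy completion per combination: one vertex-disjoint choice of the small edges already certifies a yes-instance.

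With this fix, your version is slightly cleaner than the paper's on one point. Your kept-edge rule does not depend on $S$, so the kernel really is computed once. The paper's rule keeps $d_i$ neighbours, which depends on the tree.
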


\begin{figure}[b]
    \centering
    \begin{subfigure}[t]{0.32\textwidth}
    \centering
    \resizebox{0.7\linewidth}{!}{
        \begin{tikzpicture}[scale=0.8]
        \node[draw, circle, color=blue, minimum size=2cm] (C1) at (0,4) {};
        \node[anchor=center, color=blue, color=blue] (C1label) at (0, 3.2) {$V_1$};
        \node[draw, circle, minimum size=0.8cm, inner sep=0pt] (1) at (-0.5, 4) {1};
        \node[draw, circle, minimum size=0.8cm, inner sep=0pt] (2) at (0.5, 4) {2};
    
        \node[draw, circle, color=blue, minimum size=2cm] (C2) at (5,4) {};
        \node[anchor=center, color=blue] (C2label) at (5, 3.1) {$V_2$};
        \node[draw, circle, minimum size=0.8cm, inner sep=0pt] (3) at (4.5, 4) {3};
        \node[draw, circle, minimum size=0.8cm, inner sep=0pt] (4) at (5.5, 4) {4};
    
        \node[draw, circle, color=blue, minimum size=3cm] (C3) at (0,0) {};
        \node[anchor=center, color=blue] (C3label) at (0, -1.6) {$V_3$};
        \node[draw, circle, minimum size=0.8cm, inner sep=0pt] (5) at (-0.5, 0) {5};
        \node[draw, circle, minimum size=0.8cm, inner sep=0pt] (6) at (-0.5, 1) {6};
        \node[draw, circle, minimum size=0.8cm, inner sep=0pt] (7) at (0.5, 1) {7};
        \node[draw, circle, minimum size=0.8cm, inner sep=0pt] (8) at (0.5, 0) {8};
        \node[draw, circle, minimum size=0.8cm, inner sep=0pt] (9) at (0.5, -1) {9};
        \node[draw, circle, minimum size=0.8cm, inner sep=0pt] (10) at (-0.5, -1) {10};
    
        \node[draw, circle, color=blue, minimum size=3cm] (C4) at (5,0) {};
        \node[anchor=center, color=blue] (C4label) at (5, -1.6) {$V_4$};
        \node[draw, circle, minimum size=0.8cm, inner sep=0pt] (11) at (4.5, 0) {11};
        \node[draw, circle, minimum size=0.8cm, inner sep=0pt] (12) at (4.5, 1) {12};
        \node[draw, circle, minimum size=0.8cm, inner sep=0pt] (13) at (5.5, 1) {13};
        \node[draw, circle, minimum size=0.8cm, inner sep=0pt] (14) at (5.5, 0) {14};
        \node[draw, circle, minimum size=0.8cm, inner sep=0pt] (15) at (5.5, -1) {15};
        \node[draw, circle, minimum size=0.8cm, inner sep=0pt] (16) at (4.5, -1) {16};

        \draw[color = black] (1) -- (6);
        \draw[color = black] (1) -- (7);
        \draw[color = black, bend right=80] (1) edge (5);
        \draw[color = black, bend left=80] (1) edge (4);
        \draw[color = black, bend left=80] (1) edge (3);
        \draw[color = black] (2) -- (3);
        \draw[color = black] (3) -- (7);
        \draw[color = black] (3) -- (8);
        \draw[color = black] (3) -- (12);
        \draw[color = black] (4) -- (8);
        \draw[color = black] (4) -- (13);
        \draw[color = black, bend left=80] (4) edge (14);
        \draw[color = black] (7) -- (11);
        \draw[color = black] (8) -- (12);
        \draw[color = black] (9) -- (16);
        \draw[color = black, bend right=80] (10) edge (15);

        \end{tikzpicture}
        }
        \label{fig:FPT_G}
    \end{subfigure}
    \begin{subfigure}[t]{0.32\textwidth}
    \centering
    \resizebox{0.7\linewidth}{!}{
        \begin{tikzpicture}[scale=0.8]
        \node[draw, circle, color=blue, minimum size=2cm] (V1) at (0,4) {$V_1$};
        \node[draw, circle, color=blue, minimum size=2cm] (V2) at (5,4) {$V_2$};

        \node[draw, circle, color=blue, minimum size=3cm] (V3) at (0,0) {$V_3$};
        \node[draw, circle, color=blue, minimum size=3cm] (V4) at (5,0) {$V_4$};
       
        \draw[color = red] (V1) -- (V3);
        \draw[color = red] (V1) -- (V2);
        \draw[color = red] (V3) -- (V4);

        \end{tikzpicture}
        }
        \label{fig:FPT_S}
    \end{subfigure}
    \begin{subfigure}[t]{0.32\textwidth}
    \centering
    \resizebox{0.7\linewidth}{!}{
        \begin{tikzpicture}[scale=0.8]
        \node[draw, circle, color=blue, minimum size=2cm] (C1) at (0,4) {};
        \node[anchor=center, color=blue, color=blue] (C1label) at (0, 3.2) {$V_1$};
        \node[draw, circle, minimum size=0.8cm, inner sep=0pt] (1) at (-0.5, 4) {1};
        \node[draw, circle, minimum size=0.8cm, inner sep=0pt] (2) at (0.5, 4) {2};
    
        \node[draw, circle, color=blue, minimum size=2cm] (C2) at (5,4) {};
        \node[anchor=center, color=blue] (C2label) at (5, 3.1) {$V_2$};
        \node[draw, circle, minimum size=0.8cm, inner sep=0pt] (3) at (4.5, 4) {3};
        \node[draw, circle, minimum size=0.8cm, inner sep=0pt] (4) at (5.5, 4) {4};
    
        \node[draw, circle, color=blue, minimum size=3cm] (C3) at (0,0) {};
        \node[anchor=center, color=blue] (C3label) at (0, -1.6) {$V_3$};
        \node[draw, circle, minimum size=0.8cm, inner sep=0pt] (5) at (-0.5, 0) {5};
        \node[draw, circle, minimum size=0.8cm, inner sep=0pt] (6) at (-0.5, 1) {6};
        \node[draw, circle, minimum size=0.8cm, inner sep=0pt] (7) at (0.5, 1) {7};
        \node[draw, circle, minimum size=0.8cm, inner sep=0pt] (8) at (0.5, 0) {8};
        \node[draw, circle, minimum size=0.8cm, inner sep=0pt] (9) at (0.5, -1) {9};
        \node[draw, circle, minimum size=0.8cm, inner sep=0pt] (10) at (-0.5, -1) {10};
    
        \node[draw, circle, color=blue, minimum size=3cm] (C4) at (5,0) {};
        \node[anchor=center, color=blue] (C4label) at (5, -1.6) {$V_4$};
        \node[draw, circle, minimum size=0.8cm, inner sep=0pt] (11) at (4.5, 0) {11};
        \node[draw, circle, minimum size=0.8cm, inner sep=0pt] (12) at (4.5, 1) {12};
        \node[draw, circle, minimum size=0.8cm, inner sep=0pt] (13) at (5.5, 1) {13};
        \node[draw, circle, minimum size=0.8cm, inner sep=0pt] (14) at (5.5, 0) {14};
        \node[draw, circle, minimum size=0.8cm, inner sep=0pt] (15) at (5.5, -1) {15};
        \node[draw, circle, minimum size=0.8cm, inner sep=0pt] (16) at (4.5, -1) {16};

        \draw[color = black] (1) -- (6);
        \draw[color = black] (1) -- (7);
        \draw[color = black, bend right=80] (1) edge (5);
        \draw[color = black, bend left=80] (1) edge (4);
        \draw[color = black, bend left=80] (1) edge (3);
        \draw[color = black] (2) -- (3);
        \draw[color = black] (7) -- (11);
        \draw[color = black] (8) -- (12);
        \draw[color = black] (9) -- (16);
        \draw[color = black, bend right=80] (10) edge (15);

        \end{tikzpicture}
        }
        \label{fig:FPT_GexcludeProj(e)notinS}
    \end{subfigure}
    \caption{A multipartite graph $G$ (left), a spanning tree $S$ on the graph of parts $G_{\mathrm{part}}$ (middle) and $G_S$ (right).}
    \label{fig:FPT_random_multipartite}
\end{figure}

\begin{proof}
Let $G=(V=V_1\sqcup\dots\sqcup V_k,E)$ be a multipartite graph.
We assume it is connected otherwise, no interconnection tree exists.

Let $S$ be a spanning tree of the quotient graph $G_{\partition}$.
We first describe an algorithm that decides whether there is
$T\in\ITof{G}$ such that $\partition(T)=S$.  
Define $G_S=(V,E_S)$ by keeping only the edges whose projection lies in $S$:
\(
E_S=\{e\in E \mid \partition(e)\in S\}
\)
(Figure~\ref{fig:FPT_random_multipartite}).
Then $T$ is an interconnection tree of $G$ such that $\partition(T)=S$
if and only if $T\in\ITof{G_S}$.

We show how to decompose the problem along an edge of $S$. W.l.o.g. let $e=(1,2)\in S$, and let $G_{S,e}$ be the bipartite graph of $G_S$
induced by $V_1\sqcup V_2$.
The graph $S\setminus\{e\}$ has two connected components,
denoted $S_1$ and $S_2$.
Removing the edges between $V_1$ and $V_2$ in $G_S$
splits it into two connected multipartite graphs
$G_1$ and $G_2$, whose parts are connected by $S_1$ and $S_2$ respectively.

\begin{claim}\label{claim:dec}
If $G_{S,e}$ admits a matching of size $k-1$, then
there exists $T\in\ITof{G}$ with $\partition(T)=S$
if and only if there exist $T_1\in\ITof{G_1}$ 
and $T_2\in\ITof{G_2}$ such that $\partition(T_1)=S_1$ and $\partition(T_2)=S_2$.
\end{claim}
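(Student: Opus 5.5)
The forward direction is a restriction argument; the backward direction is a greedy augmentation that uses the large matching in $G_{S,e}$.

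\emph{Forward direction.} Suppose $T\in\ITof{G}$ with $\partition(T)=S$. Since $\partition(T)=S$ is a spanning tree with $k-1$ edges and $T$ is a matching, $T$ contains exactly one edge $f$ with $\partition(f)=e$, and exactly one edge projecting onto each other edge of $S$. Set
\[
T_1=\{g\in T \mid \partition(g)\in S_1\}, \qquad T_2=\{g\in T \mid \partition(g)\in S_2\}.
\]
The edges of $T_1$ lie in $G_1$, since $G_1$ keeps all edges of $G_S$ projecting into $S_1$. As a subset of a matching, $T_1$ is a matching, and $\partition(T_1)=S_1$ is a spanning tree of the quotient of $G_1$. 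Hence $T_1\in\ITof{G_1}$, and symmetrically $T_2\in\ITof{G_2}$.

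\emph{Backward direction.} Take $T_1\in\ITof{G_1}$ and $T_2\in\ITof{G_2}$ with $\partition(T_i)=S_i$. Their union is a matching: $G_1$ and $G_2$ have disjoint vertex sets, since their parts are the two sides of the cut of $S$ at $e$. Its projection is $S_1\cup S_2=S\setminus\{e\}$. It remains to add one edge $f=(u,v)$ with $u\in V_1$, $v\in V_2$, $f\in E$, such that neither $u$ nor $v$ is covered by $T_1\cup T_2$. Then $T=T_1\cup T_2\cup\{f\}$ is a matching with $\partition(T)=S$, so $T\in\ITof{G}$.

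\emph{Main step: finding $f$.} This is where the hypothesis is used. Let $M$ be a matching of size $k-1$ in $G_{S,e}$. The vertices of $V_1\sqcup V_2$ covered by $T_1\cup T_2$ are as follows. Part $1$ lies in $S_1$, say, and is covered exactly $\deg_{S_1}(1)$ times. Part $2$ lies in $S_2$ and is covered $\deg_{S_2}(2)$ times. So the number of covered vertices is
\[
\deg_{S_1}(1)+\deg_{S_2}(2)\le |S_1|+|S_2|=k-2.
\]
Each covered vertex meets at most one edge of $M$, because $M$ is a matching. Therefore at most $k-2$ edges of $M$ touch a covered vertex, and at least one edge $f\in M$ has both endpoints uncovered.

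This edge $f$ is an edge of $G_S$ between $V_1$ and $V_2$, hence an edge of $G$, which completes the construction. The degree count is the only delicate point, and it is what fixes the threshold $k-1$ in the hypothesis.
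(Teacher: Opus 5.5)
Your proof is correct and follows essentially the same argument as the paper. For the forward direction you restrict $T$ to $S_1$ and $S_2$; for the converse you use a pigeonhole count showing that at most $k-2$ vertices of $V_1\sqcup V_2$ are covered by $T_1\cup T_2$, so some edge of the size-$(k-1)$ matching in $G_{S,e}$ is free and can be added. Your bound $\deg_{S_1}(1)+\deg_{S_2}(2)\le k-2$ is just a slightly sharper form of the paper's $|U_1|+|U_2|\le (k_1-1)+(k_2-1)$.
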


\begin{proof}
Assume first that $T\in\ITof{G}$ with $\partition(T)=S$.
Let $T_1$ and $T_2$ be the subsets of $T$ whose projections are
$S_1$ and $S_2$, respectively. They are matchings because they are subsets of a matching. Moreover, $S_1$ and $S_2$ are spanning trees of the quotient graphs of $G_1$ and $G_2$, we have $T_1\in\ITof{G_1}$ and $T_2\in\ITof{G_2}$.

Conversely, let $M$ be a matching of $G_{S,e}$ of size $k-1$,
and assume that $T_1 \in \ITof{G_1}$ and $T_2\in\ITof{G_2}$ with $\partition(T_1)=S_1$ and $\partition(T_2)=S_2$. Let $U_1\subseteq V_1$ and $U_2\subseteq V_2$ be the vertices incident to edges of $T_1\cup T_2$.
Let $k_1$ and $k_2$ be the numbers of parts in $G_1$ and $G_2$,
so $k=k_1+k_2$.
Since $|T_1|=k_1-1$ and $|T_2|=k_2-1$,
we have $|U_1|+|U_2|\le (k_1-1)+(k_2-1)=k-2$.
By the pigeonhole principle,
some edge $e'\in M$ is incident to no vertex of $U_1\cup U_2$.
Then $T=T_1\cup T_2\cup\{e'\}$ is a matching,
and $\partition(T)=S$.
\end{proof}

By \cref{claim:dec}, we may assume that for every edge $(i,j)\in S$,
the maximum matching of the bipartite graph induced by $V_i$ and $V_j$
has size at most $k-2$, otherwise we can solve the problem by reduction to two instances, whose number of parts sum to $k$.

For each $(i,j)\in S$, let $G_{i,j}$ be this bipartite graph, and let $C_{i,j}$ be a minimum vertex cover obtained from a maximum matching.
By K\H{o}nig's theorem~\cite{DBLP:books/others/BondyM76}, $|C_{i,j}|<k-1$.
We now reduce $G_S$ by removing vertices that are irrelevant for the
existence of an interconnection tree. Let $d_i$ be the degree of $i$ in $S$.
For $(i,j)\in S$ and each $c\in C_{i,j}\cap V_j$,
consider the neighbors of $c$ in $V_i$ that are not in $C_{i,j}$.
If there are more than $d_i$ such vertices, keep only $d_i$ of them.
Applying this rule to all pairs $(i,j)$ yields a reduced graph $G'_S$.

Let $T \in \ITof{G_S}$. Since $C$ is a vertex cover of $G_S$, every edge of $T$ has an endpoint in $C$. Hence, vertices of $V\setminus C$ appear in $T$ only through edges whose other endpoint lies in $C$.

Consider $(i,j)\in S$ and the edge $(v_i,v_j)\in T$ with
$v_j \in C_{i,j}$. If $v_i$ was removed when constructing $G'_S$,
then by construction there are at least $d_i$ vertices in $V_i$
still present in $G'_S$ and adjacent to $v_j$.
Since exactly $d_i -1$ vertices of $V_i \setminus \{v_i\} $ are incident to edges of $T$, one of the neighbors of $v_j$ kept in $G'_S$, say $u$, is not used in $T$.
Replacing $(v_i,v_j)$ by $(u,v_j)$ preserves both the matching property
and the projection onto $(i,j)$.

Applying this replacement to every removed vertex used by $T$
yields an interconnection tree contained in $G'_S$. We have thus proved that 
 if $\ITof{G_S}$ is non-empty, then so is $\ITof{G'_S}$.

Now, the size of $G'_S$ depends only on $k$.
We bound the number of edge sets that must be considered to decide
whether there exists an interconnection tree $T$ with $\partition(T)=S$.

For each edge $(i,j)\in S$, we must select one edge between $V_i$ and $V_j$.
Since $|C_{i,j}|\le k-1$, there are at most $(k-1)^2$ edges
with both endpoints in $C_{i,j}$.
Moreover, in $G'_S$, each vertex of $C_{i,j}\cap V_j$
is adjacent to at most $d_i$ vertices of $V_i\setminus C_{i,j}$.
As $d_i\le k-1$, this yields at most $(k-1)^2$ edges
with exactly one endpoint in $C_{i,j}$.
Hence, for each pair $(i,j)\in S$,
there are at most $2(k-1)^2$ candidate edges.

Therefore, we need to consider at most $ (2(k-1)^2)^{k-1}$
sets of $k-1$ edges. These sets can be generated with amortized constant time per set, and testing whether a set forms a matching can also be done
with constant amortized overhead.

We assume that, before choosing $S$,
maximum matchings in all graphs $G_{i,j}$ are computed,
the corresponding vertex covers $C_{i,j}$ are derived,
and that all vertices to remove to obtain $G'_S$ are precomputed.
This preprocessing can be done in time $O(m\sqrt{n})$
using the Hopcroft--Karp algorithm~\cite{DBLP:conf/focs/HopcroftK71}.

To solve \ITP, we solve the problem of deciding whether there is $T\in\ITof{G}$ such that $\partition(T)=S$ for every spanning tree $S$
of $G_{\partition}$. The number of such trees is at most $k^{k-2}$,
and they can be enumerated with amortized constant delay~\cite{Aigner2018,kapoor1995algorithms,shioura1997optimal}.
Since the preprocessing is done only once,
the total running time is in $O(m\sqrt{n} + 2^k k^{3k})$.
\end{proof}


\subsection{Complete multipartite graphs}

We adapt the classical edge contraction operation to multipartite graphs as follows.

\begin{definition}\label{definition:contraction}
Let $G=(V=V_1\sqcup\dots\sqcup V_k,E)$ be a multipartite graph, and let $e=(u,v)\in E$ with $u\in V_i$ and $v\in V_j$. If $|V_i| \neq 1$ or 
$|V_j| \neq 1$, the result of the contraction of $e$ is the graph $G_{/e}=(V \setminus \{u,v\},E \setminus \{(w,z)\in E \mid w\in V_i \text{ and } z\in V_j\})$.
The graph $G_{/e}$ inherits the same partition as $G$,
except that the parts $V_i$ and $V_j$ are replaced by the single part $(V_i \cup V_j)\setminus\{u,v\}$.
\end{definition}

The condition $|V_i| \neq 1$ or $|V_j| \neq 1$ is to guarantee that $(V_i \cup V_j)\setminus\{u,v\} \neq \emptyset$, because we forbid empty part in our definition of multipartite graph.

\begin{lemma}\label{lemma:contraction}
Let $G$ be a multipartite graph.
Then $T \in \ITof{G}$ and $e \in T$
if and only if
$T\setminus \{e\} \in \ITof{G_{/e}}$.
\end{lemma}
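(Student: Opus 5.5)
We read the statement for a fixed edge $e=(u,v)\in E$ with $u\in V_i$, $v\in V_j$, $i\neq j$, and $|V_i|\neq 1$ or $|V_j|\neq 1$, so that $G_{/e}$ is defined. It then says: for every $T\subseteq E$ with $e\in T$, we have $T\in\ITof{G}$ if and only if $T\setminus\{e\}\in\ITof{G_{/e}}$. As in the proof of the FPT theorem (where $|T_1|=k_1-1$ is used), we assume that an interconnection tree has exactly one edge per edge of $\partition(T)$, that is, $\partition$ is injective on $T$ and $|T|=k-1$. Write $w$ for the merged part $(V_i\cup V_j)\setminus\{u,v\}$ of $G_{/e}$. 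Let $\pi:[k]\to([k]\setminus\{i,j\})\cup\{w\}$ send $i$ and $j$ to $w$ and fix every other index. Then the partition function of $G_{/e}$ is $\pi\circ\partition$ restricted to $V\setminus\{u,v\}$. By definition, the edge set of $G_{/e}$ consists exactly of the edges of $G$ that avoid $u$ and $v$ and do not join $V_i$ to $V_j$.

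\textbf{Forward direction.} Let $T\in\ITof{G}$ with $e\in T$. Since $T$ is a matching, no edge of $T\setminus\{e\}$ touches $u$ or $v$. By injectivity of the projection, no edge of $T\setminus\{e\}$ projects onto $(i,j)$. Hence $T\setminus\{e\}\subseteq E(G_{/e})$, and it is a matching there. Its projection in $G_{/e}$ is the image under $\pi$ of $S\setminus\{(i,j)\}$, where $S=\partition(T)$, so it is the contraction $S/(i,j)$. Contracting an edge of a tree gives a tree on $k-1$ vertices. Two distinct edges $(i,l),(j,l)\in S$ cannot merge, since together with $(i,j)$ they would form a triangle in $S$. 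So $\pi$ stays injective on $T\setminus\{e\}$, and $\partition(T\setminus\{e\})$ is a spanning tree of the quotient of $G_{/e}$ with $k-2$ edges.

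\textbf{Backward direction.} Let $T'=T\setminus\{e\}\in\ITof{G_{/e}}$. Every edge of $T'$ is an edge of $G$ avoiding $u$ and $v$, so $T=T'\cup\{e\}$ is a matching of $G$. Let $S=\partition(T)$, computed in $G$. Edges of $T'$ project under $\partition$ to pairs different from $(i,j)$, and distinct edges of $T'$ have distinct $\pi$-projections, hence distinct $\partition$-projections. Therefore $S$ has exactly $(k-2)+1=k-1$ edges on $[k]$. Connectivity of $S$ follows by lifting: any path in $\partition(T')$ between two merged parts lifts to a walk in $S$ once the edge $(i,j)$ is used to move between $i$ and $j$. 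So $S$ is connected with $k-1$ edges on $k$ vertices, hence a spanning tree, and $T\in\ITof{G}$.

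The main obstacle is the bookkeeping between the two quotient graphs. One has to check that projections neither collide nor disappear when passing through $\pi$. The triangle argument in the forward direction and the distinctness argument in the backward direction are the delicate points, and both rely on the injectivity convention. Everything else is a direct unfolding of \cref{definition:contraction}.
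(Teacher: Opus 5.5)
Your proof is correct and takes the same route as the paper. In each direction you unfold the definition of $G_{/e}$, check the matching property, and transfer the spanning-tree property of the projection through the contraction of $(i,j)$.

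Your version is more careful than the paper's in two places:
\begin{itemize}
\item You make explicit the convention that $\partition$ is injective on $T$. The paper uses this silently when it asserts that no edge of $T\setminus\{e\}$ joins $V_i$ to $V_j$.
\item You establish the tree property directly, via the triangle argument and the edge-count/lifting argument. The paper instead invokes the identity $(G_{/e})_{\partition}=(G_{\partition})_{/(i,j)}$, which in general holds only as an inclusion.
\end{itemize}
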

\begin{proof}
Let $e=(u,v)$ with $u\in V_i$ and $v\in V_j$.

Assume first that $T \in \ITof{G}$ and $e \in T$.
Since $T$ is a matching, $T\setminus\{e\}$ is still a matching.
Moreover, no edge of $T\setminus\{e\}$ has one endpoint in $V_i$
and the other in $V_j$, hence it is also a matching of $G_{/e}$.

The graph $\partition(T)$ is a spanning tree of $G_{\partition}$
and contains the edge $(i,j)=\partition(e)$.
Therefore, $\partition(T)\setminus\{\partition(e)\}$
is a spanning tree of $(G_{/e})_{\partition}$.
Hence, $T\setminus\{e\}\in\ITof{G_{/e}}$.

Conversely, assume that $T\setminus\{e\} \in \ITof{G_{/e}}$.
Then $T\setminus\{e\}$ is a matching of $G_{/e}$,
and therefore also a matching of $G$,
since the contraction only affects the parts $V_i$ and $V_j$.
Adding the edge $e$ preserves the matching property,
because the vertices $u$ and $v$ do not appear in $G_{/e}$.

Moreover,
$\partition(T\setminus\{e\})$
is a spanning tree of
$(G_{/e})_{\partition}=(G_{\partition})_{/(i,j)}$.
Adding back the edge $e$,
whose projection is $(i,j)$,
yields that $\partition(T)$ is a spanning tree of $G_{\partition}$.
Thus, $T\in\ITof{G}$ and contains $e$.
\end{proof}

\begin{lemma}\label{lemma:complete}
    Let $G =( V = V_1\sqcup\dots\sqcup V_k ,E)$ a complete multipartite graph, then $\ITof{G} \neq \emptyset$ if and only if $|V| \geq 2(k-1)$.
\end{lemma}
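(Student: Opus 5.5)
The necessity direction is a counting argument. An interconnection tree $T$ has exactly $k-1$ edges, because $\partition(T)$ is a spanning tree on $k$ vertices and $\partition$ is injective on $T$: two edges of $T$ with the same projection would create a repeated edge, which is impossible in a tree. Since $T$ is a matching, its edges cover $2(k-1)$ distinct vertices, so $|V|\geq 2(k-1)$.

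For sufficiency I would argue by induction on $k$, using \cref{lemma:contraction} in the converse direction. If $k=1$, the empty set is an interconnection tree. For $k\geq 2$, assume $|V|\geq 2(k-1)$ and choose two parts $V_i,V_j$ such that $|V_i|+|V_j|\geq 3$. Such a pair exists: otherwise every part would have size $1$ (if some part had size at least $2$, pairing it with any other part would give a sum of at least $3$), so $|V|=k<2(k-1)$ when $k\geq 3$. For $k=2$ the bound only requires $|V|\geq 2$, and any edge between the two parts is already an interconnection tree, so this case is handled directly.

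Now pick $u\in V_i$ and $v\in V_j$; the edge $e=(u,v)$ exists because $G$ is complete. Since $|V_i|+|V_j|\geq 3$, at least one of the two parts has size different from $1$, so $G_{/e}$ is well defined. It is again complete multipartite. The merged part $(V_i\cup V_j)\setminus\{u,v\}$ is an independent set, because all edges between $V_i$ and $V_j$ are removed. Every other pair of parts keeps all its edges, and the vertices of the merged part keep all their edges to the other parts. The contracted graph has $k-1$ parts and $|V|-2\geq 2(k-2)$ vertices, so the induction hypothesis gives an interconnection tree $T'$ of $G_{/e}$. By \cref{lemma:contraction}, $T'\cup\{e\}\in\ITof{G}$.

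The main point to handle carefully is choosing $e$ so that the contraction is legal, since empty parts are forbidden, together with the small base cases. Apart from that, the step needing care is checking that completeness is preserved under contraction.
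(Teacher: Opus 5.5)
Your proof is correct and is essentially the paper's argument: the same counting for necessity, and the same induction on $k$ that contracts one edge between a part of size at least $2$ and another part before invoking \cref{lemma:contraction}. The paper takes the largest part, which has size at least $2$ once $k\geq 3$, and this is equivalent to your choice of a pair with $|V_i|+|V_j|\geq 3$; your explicit check that $G_{/e}$ remains complete multipartite fills in a step the paper leaves implicit, and for necessity $|T|\geq k-1$ already suffices, so your injectivity remark is not needed.
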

\begin{proof}
Assume that $T \in \ITof{G}$ is an interconnection tree of $G$. Since there are $k$ parts in $G$ and $\partition(T)$ is a spanning tree of $G_{\partition}$, $T$ has $k-1$ edges. Since $T$ is a matching, it contains $2(k-1)$ vertices and thus $|V| \geq 2(k-1)$.

We prove by induction on $k$ that a complete multipartite graph with $|V| \geq 2(k-1)$ has an interconnection tree. If there is a single part, $T = \emptyset$ is an interconnection tree of $G$. If $G$ has exactly two parts $V_1$ and $V_2$, we have $v_1 \in V_1$ and $v_2 \in V_2$ because the parts are non-empty and $\{(v_1,v_2)\}$ is an interconnection tree of $G$.

Let us assume the property for $k-1 \geq 2$ parts and let us consider $G$ with $k$ parts such that $|V| \geq 2(k-1)$. W.l.o.g., let the largest part of $V$ be $V_1$. Because $k\geq 3$ and $|V| \geq 2(k-1)$, we have $|V_1|\geq 2$. We let $u \in V_1$ and $v \in V_2$ (which exists because the parts are non-empty).
The multipartite graph $G_{/(u,v)}$ has $k-1$ non-empty parts, and it satisfies $|V(G_{/(u,v)})| = |V(G)| - 2 \geq 2(k-2)$. By the induction hypothesis used on $G_{/(u,v)}$, we have an interconnection tree $T$ of $G_{/(u,v)}$. By Lemma~\ref{lemma:contraction}, $T \cup \{(v_1,v_2)\}$ is an interconnection tree of $G$, which proves the induction.
\end{proof}

As a direct corollary of \cref{lemma:complete}, we have a linear time algorithm to decide \ITP\ on complete multipartite graph.

\begin{corollary}\label{corollary:complete}
    \ITP\ can be solved in time $O(n)$ when restricted to complete multipartite graphs with $n$ vertices.
\end{corollary}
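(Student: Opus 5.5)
The plan is to read the decision off \cref{lemma:complete}. On complete multipartite graphs, that lemma says $\ITof{G}\neq\emptyset$ if and only if $|V|\geq 2(k-1)$. So it suffices to compute $n=|V|$ and the number of parts $k$ in $O(n)$ time and then compare $n$ with $2(k-1)$. Correctness follows at once from the equivalence in \cref{lemma:complete}.

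The only thing to check is the running time, and in particular that we never read the edge set. For a complete multipartite graph the edge set has size $\Theta(n^2)$ in general, so the input must be taken to be the vertex set together with the function $\partition$. This is in line with the convention that a multipartite graph is always given with $\partition$. The edges are then implicit: every pair of vertices in distinct parts is adjacent.

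Given this representation, we scan the $n$ vertices once. We mark each value $\partition(v)\in[k]$ in a boolean array of size at most $n$, using the fact that nonempty parts force $k\le n$, and count the distinct marked values to get $k$. If $k$ is supplied directly, this step is trivial. The final comparison costs $O(1)$, so the whole procedure runs in $O(n)$.

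The main obstacle is entirely one of input representation, not combinatorics. I would state explicitly that the bound assumes complete multipartite graphs are encoded by their partition. Otherwise, merely reading the input would cost $\Theta(n+m)$.

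If an actual witness is also wanted, the inductive proof of \cref{lemma:complete} already gives one. Repeatedly pair a vertex of a largest part with a vertex of another part, and contract using \cref{lemma:contraction}. Maintaining part sizes in buckets makes this run in $O(n)$ as well. This is not needed for the decision version.
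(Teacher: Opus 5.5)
Your proposal is correct and follows the paper, which also derives this corollary directly from \cref{lemma:complete} by computing $n$ and $k$ and comparing $n$ with $2(k-1)$. Your remark that the $O(n)$ bound requires accessing the graph through $\partition$ rather than reading its possibly quadratic edge set is a point the paper leaves implicit.
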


It is also possible to derive a formula for counting the number of interconnection trees of a multipartite graph using classical generating function techniques. This yields a linear-time algorithm for solving \CITP\ on complete multipartite graph.

\begin{theorem}\label{th:counting}
Let $G=(V=V_1\sqcup\dots\sqcup V_k,E)$ be a complete multipartite graph.
Then
\[
|\ITof{G}| = (k-2)!\binom{|V|-k}{k-2}\prod_{i=1}^k |V_i|.
\]
\end{theorem}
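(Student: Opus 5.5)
We organize the count by the quotient tree. In a complete multipartite graph, $G_{\partition}$ is the complete graph $K_k$. Every spanning tree $S$ of $K_k$ is therefore available, and every interconnection tree $T$ has $\partition(T)=S$ for exactly one such $S$. We fix $S$ and let $d_i$ be the degree of part $i$ in $S$.

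\textbf{Step 1: count the matchings with a fixed projection.} An interconnection tree projecting onto $S$ is determined in two stages. First, in each part $V_i$ we choose an injective assignment of the $d_i$ tree edges incident to $i$ to distinct vertices of $V_i$. There are $|V_i|^{\underline{d_i}}=|V_i|!/(|V_i|-d_i)!$ such assignments. Second, each tree edge $(i,j)$ is realised by the pair of vertices assigned to it on both sides. By completeness this pair is always an edge of $G$, and distinct tree edges use distinct vertices, so the result is a matching. Conversely, any $T$ with $\partition(T)=S$ arises this way exactly once. Hence
\[
|\ITof{G}|=\sum_{S}\prod_{i=1}^k |V_i|^{\underline{d_i(S)}} .
\]

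\textbf{Step 2: sum over trees via Prüfer sequences.} The number of labelled trees on $[k]$ with degree sequence $(d_1,\dots,d_k)$ is $\binom{k-2}{d_1-1,\dots,d_k-1}$. Writing $a_i=d_i-1\ge 0$ with $\sum a_i=k-2$, and using $|V_i|^{\underline{a_i+1}}=|V_i|\,(|V_i|-1)^{\underline{a_i}}$, the sum becomes
\[
\prod_i |V_i|\cdot (k-2)!\sum_{a_1+\dots+a_k=k-2}\prod_i \frac{(|V_i|-1)^{\underline{a_i}}}{a_i!}
=\prod_i |V_i|\cdot (k-2)!\sum_{a_1+\dots+a_k=k-2}\prod_i\binom{|V_i|-1}{a_i}.
\]

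\textbf{Step 3: apply Vandermonde.} By Vandermonde's convolution, which is the coefficient of $x^{k-2}$ in $\prod_i(1+x)^{|V_i|-1}=(1+x)^{|V|-k}$, the inner sum equals $\binom{|V|-k}{k-2}$. This yields the claimed formula. The formula is also consistent with Lemma~\ref{lemma:complete}, since the count vanishes exactly when $|V|-k<k-2$.

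The step needing the most care is Step 1. We must check that the bijection between pairs (tree, injective local assignments) and interconnection trees is exact. The main point is that the edges incident to part $i$ in $S$ are distinguishable, because they are labelled by their other endpoint part, so the assignments really are injections rather than subsets.

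The small cases $k=1,2$ can be checked directly against the formula. For $k=1$ the formula is read with the convention $(k-2)!\binom{\cdot}{-1}$, and the count itself is $1$; this degenerate case should be treated separately. For $k=2$ the formula gives $|V_1||V_2|$, which matches a direct count.
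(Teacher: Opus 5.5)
Your proof is correct and follows essentially the same route as the paper. Both count the interconnection trees over a fixed quotient tree as $\prod_i (|V_i|)_{d_i}$, sum over trees using the Prüfer degree-sequence formula, and extract the coefficient of $x^{k-2}$ in $(1+x)^{|V|-k}$. Your identity $(|V_i|)_{a_i+1}/a_i! = |V_i|\binom{|V_i|-1}{a_i}$ followed by Vandermonde is an elementary version of the paper's generating-function step $F_i(x)=|V_i|(1+x)^{|V_i|-1}$. Your remark that the case $k=1$ needs separate treatment is a fair point that the paper leaves implicit.
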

\begin{proof}
Let $S$ be a spanning tree of $G_{\partition}$, and let $d_i(S)$
denote the degree of part $i$ in $S$.
For any $T\in\ITof{G}$ such that $\partition(T)=S$,
exactly $d_i(S)$ vertices of $V_i$ are incident to edges of $T$.
Hence, the number of such trees is
\(
\prod_{i=1}^k (|V_i|)_{d_i(S)},
\)
where $(n)_m = n(n-1)\cdots(n-m+1)=\frac{n!}{(n-m)!}$
denotes the falling factorial.

Let $T_k$ be the set of labeled trees on $k$ vertices. Then
\[
|\ITof{G}|
= \sum_{S\in T_k} \prod_{i=1}^k (|V_i|)_{d_i(S)}.
\]

Set $a_i=d_i(S)-1$. Since $\sum_{i=1}^k d_i(S)=2(k-1)$,
we have $a_1+\cdots+a_k=k-2$.
The number of trees in $T_k$ with degree sequence $(d_1,\dots,d_k)$ is
\(
\frac{(k-2)!}{a_1!\cdots a_k!},
\)
a classical consequence of the bijection between labeled trees of $K_k$
and Prüfer sequences of length $k-2$ \cite{Aigner2018,DBLP:books/others/BondyM76}.

Therefore,
\[
|\ITof{G}|
=\sum_{a_1+\cdots+a_k=k-2}
\frac{(k-2)!}{a_1!\cdots a_k!}
\prod_{i=1}^k (|V_i|)_{a_i+1}.
\]

We now use classical methods from analytic combinatorics (see \cite{flajolet2009analytic}). For each $i$, define
\[
F_i(x)=\sum_{a\ge 0} (|V_i|)_{a+1}\frac{x^a}{a!}.
\]

Let $F(x)=\prod_{i=1}^k F_i(x)$. In the product expansion of $F(x)$, the coefficient of $x^{k-2}$ is
\[
\sum_{a_1+\cdots+a_k=k-2}
\frac{1}{a_1!\cdots a_k!}
\prod_{i=1}^k (|V_i|)_{a_i+1}.
\]
Hence,
\(
|\ITof{G}|=(k-2)!\,[x^{k-2}]F(x).
\)

Using the identity
\(
\sum_{m\ge 0}(n)_m\frac{x^m}{m!}=(1+x)^n
\)
and differentiating after an index shift, we obtain
\(
F_i(x)=|V_i|(1+x)^{|V_i|-1}.
\)
Therefore,
\[
F(x)=\left(\prod_{i=1}^k |V_i|\right)(1+x)^{\sum_{i=1}^k(|V_i|-1)}
=\left(\prod_{i=1}^k |V_i|\right)(1+x)^{|V|-k}.
\]

Since the coefficient of $x^{k-2}$ in $(1+x)^{|V|-k}$ is $
\binom{|V|-k}{k-2}$, we obtain
\[
|\ITof{G}|=(k-2)!\binom{|V|-k}{k-2}\prod_{i=1}^k |V_i|.
\]
\end{proof}

\subsection{Quasi complete multipartite graph}

We now consider the more general case of a \emph{quasi-complete} multipartite graph, which will be useful for the enumeration procedure in \cref{WICTEnum}.
We introduce the bipartite graph $G_{bp}$, which describes how the distinguished part $V_1$ is connected to the other parts: a vertex $u\in V_1$ is adjacent to the vertex $V_i$ in $G_{bp}$ if and only if $u$ has at least one neighbor in $V_i$.

\begin{definition}
Let $G=(V=V_1\sqcup\dots\sqcup V_k,E)$ be a quasi-complete multipartite graph.
We define $G_{bp}=(V_1\sqcup\{V_2,\dots,V_k\},E_{bp})$ as the bipartite representation of $G$, where $
E_{bp}
    = \bigl\{ (u, V_i) \,\big|\, 
        u \in V_1,\ i \ge 2,\ 
        \exists v \in V_i : (u,v) \in E
      \bigr\}$.
\end{definition}

We will need to contract several edges and to extend \cref{lemma:contraction} to this setting.
For $G$ a multipartite graph and $M$ a matching of $G$ such that $\partition(M)$ is acyclic, we define $G_{/M}$ as the contraction of $G$ along the edges of $M$ as defined in \cref{definition:contraction}.
The acyclicity of $\partition(M)$ guarantees that $M$ remains a matching after successive contractions and that the order of contractions does not affect the graph $G_{/M}$.

\begin{lemma}\label{lemma:contractionOfM}
Let $G=(V=V_1\sqcup\dots\sqcup V_k,E)$ be a multipartite graph, and let $M\subset E$ be a matching such that $\partition(M)$ is acyclic. Then $M \subseteq T$ and $T\in \ITof{G}$ if and only if $T\setminus M \in \ITof{G_{/M}}$.
\end{lemma}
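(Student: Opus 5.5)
I will prove the statement by induction on $|M|$, using \cref{lemma:contraction} for the single-edge step. If $M=\emptyset$, then $G_{/M}=G$ and there is nothing to prove. Otherwise, fix an edge $e\in M$ and set $M'=M\setminus\{e\}$. The strategy is to write $G_{/M}=(G_{/e})_{/M'}$ and apply the single-edge lemma to $e$ in $G$, then the induction hypothesis to $M'$ in $G_{/e}$.

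To apply the induction hypothesis in $G_{/e}$, I must check that $M'$ is a matching of $G_{/e}$ whose projection onto the quotient of $G_{/e}$ is acyclic. This check is the main obstacle, and the argument should use acyclicity of $\partition(M)$ explicitly. Let $e=(u,v)$ with $u\in V_i$ and $v\in V_j$. Contraction deletes only the vertices $u,v$ and the edges with projection $(i,j)$.
\begin{itemize}
\item No edge of $M'$ uses $u$ or $v$, since $M$ is a matching.
\item No edge of $M'$ projects to $(i,j)$. Otherwise $\partition(M)$ would contain the edge $(i,j)$ twice, a $2$-cycle, contradicting acyclicity.
\end{itemize}
Hence every edge of $M'$ survives in $G_{/e}$, and $M'$ is still a matching there. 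Its projection in $(G_{/e})_{\partition}=(G_{\partition})_{/(i,j)}$ is $\partition(M)$ with the edge $(i,j)$ contracted. Contracting an edge of a forest yields a forest, so this projection is acyclic. I will also note that the nonemptiness condition of \cref{definition:contraction} holds at each step. This is implicitly assumed in the definition of $G_{/M}$, and it holds because $|V(G_{/M})|$ counts are preserved as long as each merged part retains a vertex, which the paper takes as given when defining $G_{/M}$.

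Now chain the equivalences. By \cref{lemma:contraction}, $T\in\ITof{G}$ and $e\in T$ if and only if $T\setminus\{e\}\in\ITof{G_{/e}}$. By the induction hypothesis applied to $G_{/e}$ and $M'$, $M'\subseteq T\setminus\{e\}$ and $T\setminus\{e\}\in\ITof{G_{/e}}$ if and only if $T\setminus M\in\ITof{(G_{/e})_{/M'}}=\ITof{G_{/M}}$. Combining the two gives: $M\subseteq T$ and $T\in\ITof{G}$ if and only if $T\setminus M\in\ITof{G_{/M}}$.

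One point needs care in the backward direction. There we start from a set $T$ with $T\setminus M\in\ITof{G_{/M}}$, and we should read $T$ as $(T\setminus M)\cup M$. With this reading, the conclusion $M\subseteq T$ is automatic. I will state this convention explicitly so that the backward implication is well-posed.

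Finally, the claimed order-independence of contractions follows from the same observation. No edge of $M'$ is deleted by contracting $e$, so contractions along distinct edges of $M$ commute, and $G_{/M}$ is well defined.
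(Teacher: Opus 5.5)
Your proof is correct and follows the same route as the paper: induction on $|M|$, applying \cref{lemma:contraction} to a single edge $e$, with acyclicity of $\partition(M)$ guaranteeing that $M\setminus\{e\}$ survives the contraction as a matching whose projection is still acyclic. Acyclicity has to be read as a multigraph condition here, as you do. You spell out the details that the paper's one-line proof leaves implicit, including the convention $T\supseteq M$ needed for the backward direction.
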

\begin{proof}
By induction on $|M|$, using \cref{lemma:contraction}. The acyclicity of $\partition(M)$ ensures that $M$ remains a matching after each contraction.
\end{proof}

We now characterize the quasi-complete multipartite graphs that admit an interconnection tree.
We let $\mathrm{MaxMatching}(G)$ denote the maximum size of a matching of $G$.

\begin{lemma}\label{lemma:quasicomplete}
Let $G =( V = V_1\sqcup\dots\sqcup V_k ,E)$ be a quasi-complete multipartite graph. Then there exists an interconnection tree if and only if the following two conditions hold:
\begin{enumerate}
     \item $\left|V\right| - \left|V_1\right| \geq 
     2(k-1) - \mathrm{MaxMatching}(G_{bp})$;
     \item either for all $i\in[2..k]$, $|V_i| = 1$,
    or there exists $i \in [2..k]$ such that $|V_i|\geq 2$ and there is $(u,v) \in E(G)$ with $u\in V_1$ and $v \in V_i$.
\end{enumerate}
\end{lemma}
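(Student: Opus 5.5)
We prove necessity by counting and by a structural look at the part tree. We prove sufficiency by contracting a well-chosen matching between $V_1$ and the other parts. After that contraction, \cref{lemma:complete} applies to the complete part of the remaining graph, and \cref{lemma:contractionOfM} lifts the result back to $G$.

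\emph{Necessity.} Let $T\in\ITof{G}$, and let $R\subseteq T$ be the edges incident to $V_1$, with $r=|R|$. Since $\partition(T)$ is a tree, it has no multi-edges, so the edges of $R$ go to pairwise distinct parts. Since $T$ is a matching, they also use distinct vertices of $V_1$. Hence $R$ induces a matching of size $r$ in $G_{bp}$, and $r\le \mathrm{MaxMatching}(G_{bp})$. Each edge of $R$ uses one vertex of $V\setminus V_1$, and each of the other $k-1-r$ edges of $T$ uses two. As $T$ is a matching, this gives $|V|-|V_1|\ge r+2(k-1-r)=2(k-1)-r$, which implies condition 1. For condition 2, suppose every part adjacent to $V_1$ in $G$ is a singleton. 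Every tree-neighbour $i$ of part $1$ in $\partition(T)$ is then a singleton part, so its unique vertex is consumed by the edge to $V_1$, and $i$ is a leaf of $\partition(T)$. Hence $\partition(T)$ is a star centered at $1$ (assuming $k\ge 2$; the case $k=1$ is trivial), and all parts $V_2,\dots,V_k$ are singletons.

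\emph{Sufficiency.} Let $\mu=\mathrm{MaxMatching}(G_{bp})$. If all $V_i$ with $i\ge2$ are singletons, condition 1 reads $k-1\ge 2(k-1)-\mu$, so $\mu=k-1$. A maximum matching of $G_bp$ then realizes a star of matching edges from $V_1$ to every part, which is an interconnection tree. Otherwise, condition 2 gives an edge $(u,V_i)$ of $G_{bp}$ with $|V_i|\ge2$. We modify a maximum matching $M_{bp}$ so that it covers some non-singleton part. If no non-singleton part is covered, then $V_i$ is uncovered, so $u$ must be matched (else $M_{bp}$ is not maximum), say to $V_j$. Replacing $(u,V_j)$ by $(u,V_i)$ keeps the size $\mu$.

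We realize $M_{bp}$ as a matching $M$ of $G$ by picking, for each $(w,V_\ell)\in M_{bp}$, one neighbour of $w$ in $V_\ell$. Then $\partition(M)$ is a star centered at part $1$, hence acyclic. We contract $M$ and obtain $G_{/M}$ with $k-\mu$ parts. The merged part is $P=W\sqcup X$, where $W\subseteq V_1$ are the unused vertices of $V_1$ and $X$ is the set of unused vertices of the contracted parts. The choice of $M_{bp}$ ensures $X\neq\emptyset$. We then delete $W$ and restrict to the parts $X, V_\ell$ ($\ell$ uncontracted). The resulting graph $G'$ is a complete multipartite graph with $k-\mu$ parts, since all vertices involved come from complete parts, and it has $|V|-|V_1|-\mu$ vertices. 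By condition 1 this is at least $2(k-1)-2\mu=2((k-\mu)-1)$. So \cref{lemma:complete} yields $T'\in\ITof{G'}$, which is also in $\ITof{G_{/M}}$ because $G'$ is a subgraph with the same quotient structure on the relevant parts. Finally, \cref{lemma:contractionOfM} gives $T'\cup M\in\ITof{G}$.

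The main obstacle is that the merged part $P$ is not complete to the other parts. This is why we discard $W$ and must guarantee $X\ne\emptyset$. That guarantee is exactly where condition 2 and the exchange argument on $M_{bp}$ are needed. The remaining checks are routine: the edge cases $\mu=0$ (excluded in this branch, since condition 2 gives an edge of $G_{bp}$) and $k-\mu=1$ (then $T'=\emptyset$).
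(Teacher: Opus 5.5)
Your proof is correct and follows essentially the same route as the paper: the same vertex count for condition 1 and the same leaf argument for condition 2; for sufficiency, the same exchange on a maximum matching of $G_{bp}$, then lifting, contracting, and applying \cref{lemma:complete} and \cref{lemma:contractionOfM}. The only minor difference is that you delete the unused vertices $W$ of $V_1$ outright, whereas the paper removes isolated vertices; your version avoids having to invoke maximality of the matching to see that those vertices are in fact isolated.
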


\begin{proof}
($\Rightarrow$) Let $T \in \ITof{G}$. The tree $T$ has exactly $k-1$ edges. Let $M$ be the subset of edges of $T$ that are incident to $V_1$ and to some part $V_i$ with $i\neq 1$. Since $T$ is a spanning tree of $G_{\partition}$, no two edges of $T$ have the same projection. Hence, $M$ is a matching in the bipartite graph $G_{bp}$.

The edges in $T\setminus M$ are incident only to parts $V_i$ with $i \neq 1$. Since $T$ is a matching, these edges are vertex-disjoint and use $2(k-1-|M|)$ distinct vertices outside $V_1$. Hence,
\[
|V| - |V_1| \ge 2(k-1) - |M|.
\]
Since $M$ is a matching of $G_{bp}$, we have $|M| \leq \mathrm{MaxMatching}(G_{bp})$, which gives the first condition.

 If all parts $V_i$ with $i\neq 1$ have cardinality one, the second condition holds. Otherwise, consider a part $V_i$ with $|V_i|\ge 2$. Since $\partition(T)$ is a spanning tree of $G_{\partition}$, there is a path connecting $V_1$ to $V_i$. Parts of cardinality one have degree one in $T$ and therefore appear only as leaves in $\partition(T)$. Hence, the second part in the path from $V_1$ to $V_i$ must also have cardinality at least two. This implies the existence of an edge between $V_1$ and some $V_j$ with $|V_j|\ge 2$, proving the second condition.

($\Leftarrow$) Assume now that both conditions hold.
Let $M$ be a maximum matching of $G_{bp}$.

If the second condition holds because all parts $V_i$ with $i\neq 1$ have size one, then $|V|-|V_1|=k-1$ and thus $|M|\ge k-1$. Since there are $k-1$ vertices on the right side of $G_{bp}$, the matching $M$ connects $V_1$ to every other part. As each $V_i$ has cardinality one, these edges also belong to $G$, and $M$ is an interconnection tree.

Otherwise, there exists $i$ and an edge $(u,v)\in E$ such that $|V_i|\ge 2$, $u\in V_1$ and $v\in V_i$. By a standard exchange argument, we may assume that $M$ contains an edge corresponding to a neighbor of $V_i$.

Let $M'$ be a lifting of $M$ to $G$, that is, for each edge $(u,V_i)\in M$, we choose an edge $(u,v)\in E$ with $v\in V_i$ and add it to $M'$.
We consider the graph $G_{/M'}$ obtained by contracting all edges of $M'$.
The graph $G_{/M'}$ has $k-|M|$ parts. In this graph, vertices originating from $V_1$ may become isolated, whereas vertices originating from other parts remain adjacent to all remaining parts.

Let $G'_{/M'}$ be the graph obtained from $G_{/M'}$ by removing isolated vertices. This graph is a complete multipartite graph, and its first part is non-empty because it contains the image of $v$.

We have $|G'_{/M'}| \ge |V| - |V_1| - |M|$.
Using the first condition, we obtain $|G'_{/M'}| \ge 2(k-1) - |M| - |M| = 2(k-|M|-1)$.
By \cref{lemma:complete}, the graph $G'_{/M'}$ admits an interconnection tree $T'$. By \cref{lemma:contractionOfM}, the set $M' \cup T'$ is an interconnection tree of $G$.
\end{proof}

\begin{corollary}\label{corollary:quasicomplete}
\ITP\ restricted to quasi-complete multipartite graphs can be solved in time $O(m\sqrt{k})$, where $m$ is the number of edges and $k$ the number of parts.
\end{corollary}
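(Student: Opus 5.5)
The corollary follows by checking the two conditions of \cref{lemma:quasicomplete} algorithmically. That lemma is an exact characterization, so it suffices to test conditions 1 and 2 within the claimed time.

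First, build $G_{bp}$. Scan each edge $(u,v)$ of $G$ with $u\in V_1$, and add the edge $(u,V_{\partition(v)})$ to $G_{bp}$. Duplicates are removed with a marker array indexed by part, reset per vertex $u$, or they are simply kept, since parallel edges do not affect matching size. This costs $O(m+n)$. The resulting bipartite graph has at most $m$ edges, and its right side has only $k-1$ vertices.

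Second, compute $\mathrm{MaxMatching}(G_{bp})$ with Hopcroft--Karp. The usual bound is $O(m\sqrt{N})$ with $N$ the number of vertices, but any matching has size at most $k-1$. The standard phase argument therefore gives $O(\sqrt{k})$ phases, each costing $O(m)$. This argument says that after $\sqrt{k}$ phases the shortest augmenting paths have length more than $\sqrt{k}$, so at most $\sqrt{k}$ further augmentations remain because of the $k-1$ upper bound. Vertices of $V_1$ that are isolated in $G_{bp}$ can be discarded beforehand, so the total is $O(m\sqrt{k})$. This step is the only nontrivial point: the $\sqrt{k}$ factor, rather than $\sqrt{n}$, must be justified by the bounded matching size, as sketched.

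Third, check condition 1 by computing $|V|-|V_1|$ and comparing. Check condition 2 by testing whether all $|V_i|=1$ for $i\ge2$. Failing that, scan the edges leaving $V_1$ and look for one whose endpoint lies in a part of size at least $2$. Both checks are linear. Correctness is immediate from \cref{lemma:quasicomplete}, so the total running time is $O(m\sqrt{k})$, assuming $G$ is connected so that $n=O(m)$. Otherwise we answer no directly, since a connected quotient graph is necessary.
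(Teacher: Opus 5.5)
Your route is the paper's: build $G_{bp}$, compute a maximum matching with Hopcroft--Karp, and test the two conditions of \cref{lemma:quasicomplete}. Your phase-count argument is a needed refinement, not a detour. The paper only invokes the bound $O(|E_{bp}|\sqrt{|V_{bp}|})$. Since $|V_{bp}|=|V_1|+k-1$ can be of order $n$, that citation alone yields $O(m\sqrt{n})$, which is the bound the paper itself uses later. The $\sqrt{k}$ comes from your observation: the number of phases is $O(\sqrt{\nu})$, where $\nu\le k-1$ is the maximum matching size. After $d$ phases, each of the $\nu-|M|$ disjoint augmenting paths in $M\oplus M^*$ contains at least $d$ edges of $M$.

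One step is wrong: a disconnected $G$ does not force the answer no. What is necessary is connectivity of $G_{\partition}$, not of $G$. Vertices of $V_1$ with no neighbours make $G$ disconnected without affecting anything. Take $V_1=\{a,b,x\}$, $V_2=\{c\}$, $V_3=\{d\}$ and $E=\{(c,d),(a,c),(x,d)\}$. This graph is quasi-complete and $b$ is isolated, yet $\{(a,c),(x,d)\}$ is an interconnection tree. Your algorithm would wrongly reject it.

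Drop that shortcut; you only used it to absorb the additive $n$ term. Isolated vertices of $V_1$ play no role in $G_{bp}$, and neither condition of \cref{lemma:quasicomplete} depends on them, so they can be skipped. When $k\ge 3$, every vertex of a complete part is adjacent to all vertices of the other complete parts, so the vertices you actually need to count or scan number $O(m)$. When $k=2$, the answer is simply whether $E\neq\emptyset$. With this fix your argument gives the claimed $O(m\sqrt{k})$ bound, and more rigorously than the paper's proof.
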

\begin{proof}
Compute the graph $G_{bp}$ and a maximum matching using the Hopcroft–Karp algorithm~\cite{DBLP:conf/focs/HopcroftK71} in time $O(|E_{bp}|\sqrt{|V_{bp}|})$. Then, using the size of this matching and Lemma~\ref{lemma:quasicomplete}, we can decide whether the graph admits an interconnection tree.
\end{proof}

Although \ITP\ on quasi-complete multipartite graphs is in $\P$, the counting problem \CITP\ is significantly harder, by a reduction from counting perfect matchings. 

\begin{theorem}
The problem \CITP\ is $\#\P$-complete when restricted to quasi-complete multipartite graphs.
\end{theorem}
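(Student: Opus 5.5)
Membership in $\#\P$ is immediate, because \ITP\ is in $\NP$ by \cref{th:ITisNP} and witnesses are checkable in linear time. For hardness I would use a parsimonious reduction, or one with an easily computable multiplicative factor, from counting perfect matchings in a bipartite graph, which is $\#\P$-complete by Valiant. Let $H=(A\sqcup B,F)$ be bipartite with $|A|=|B|=p$, $A=\{a_1,\dots,a_p\}$ and $B=\{b_1,\dots,b_p\}$.

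I would build a quasi-complete multipartite graph $G$ as follows. The first part is $V_1=A$. For each $j$ there is a singleton part $V_{j+1}=\{b_j\}$, so there are $k=p+1$ parts. A vertex $a_i\in V_1$ is adjacent to $b_j$ exactly when $(a_i,b_j)\in F$. All other parts are singletons, and we add every edge between them, as quasi-completeness requires. The point is that in any interconnection tree each singleton part has degree one in $\partition(T)$, as already observed in the proof of \cref{lemma:quasicomplete}. So no edge between two singleton parts can be used, except when $k=2$: such an edge would make both endpoints leaves forming their own component. Therefore every part $V_{j+1}$ is joined to $V_1$ by exactly one edge of $T$, and the matching condition forces these $p$ edges to use distinct vertices of $A$.

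It follows that $T$ is exactly a perfect matching of $H$, and conversely each perfect matching gives a star in the quotient graph, which is a spanning tree. This yields a bijection, so $|\ITof{G}|$ equals the number of perfect matchings of $H$. The case $p=1$ and the degenerate situation of isolated vertices in $H$ are trivial to handle separately.

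The step I would check most carefully is the degree-one argument for singleton parts: a singleton vertex can be matched only once, so its part has degree at most one in $\partition(T)$. This excludes singleton–singleton edges whenever $k\geq 3$, since a spanning tree on at least three vertices has no edge joining two leaves. The construction is clearly polynomial.
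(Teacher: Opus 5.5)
Your proof is correct and uses the same reduction as the paper. The part $V_1=A$, one singleton part per vertex of $B$, and all singleton--singleton edges are added; interconnection trees then correspond bijectively to perfect matchings of $H$. Your leaf argument (each singleton part is a leaf of $\partition(T)$, and two adjacent leaves would be cut off from the rest when $k\ge 3$) spells out why singleton--singleton edges are never used, which the paper only asserts, and handling $p=1$ or isolated vertices separately is unnecessary since the bijection already covers them.
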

\begin{proof}
We give a reduction from counting perfect matchings in bipartite graphs.
Let $G=(V_1 \sqcup V_2,E)$ be a bipartite graph with $|V_1|=|V_2|$. We construct a quasi-complete multipartite graph
$G'=(V',E')$ as follows. The vertex set $V'$ is partitioned into the part $V_1$ and, for each
vertex $v\in V_2$, a singleton part $\{v\}$. The edge set $E'$ contains all edges of $E$, together
with all edges between vertices belonging to distinct singleton parts.

In $G'$, any interconnection tree selects exactly one edge incident to each singleton part.
Hence, each interconnection tree of $G'$ corresponds bijectively to a matching of $G$ that
covers all vertices of $V_2$, that is, to a perfect matching of $G$.

Therefore, counting interconnection trees of $G'$ is equivalent to counting perfect matchings in the bipartite graph $G$. The latter problem is equivalent to computing the permanent of a $0$--$1$ matrix and is $\#\P$-complete~\cite{valiant1979complexity}.
\end{proof}

We now extend the previous result to the $t$-quasi-complete case. 
To handle $t$ non-complete parts, we rely on the fact that there is an interconnection tree in which non-complete parts can be assumed to be at bounded distance from each other in the induced spanning tree on the parts, except possibly for two of them.

\begin{theorem}
    \ITP\ restricted to $t$-quasi-complete multipartite graphs can be solved in time 
    $O((n^2k^3)^{t-1} m\sqrt{n})$, where $m$ is the number of edges, 
    $n$ the number of vertices, and $k$ the number of parts.
\end{theorem}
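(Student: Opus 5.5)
We proceed by induction on $t$, with the quasi-complete case (\cref{corollary:quasicomplete}, cost $O(m\sqrt{n})$) as the base $t=1$. The idea is to guess a small amount of structure linking one non-complete part to the others, contract it, and obtain a $(t-1)$-quasi-complete instance. The guessing costs a factor $O(n^2k^3)$ per level, which gives the bound $O((n^2k^3)^{t-1}m\sqrt n)$.

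\textbf{Structural claim.} The key step is the hint given just before the statement. If $\ITof{G}\neq\emptyset$, there is some $T\in\ITof{G}$ in which the non-complete parts $V_1,\dots,V_t$ are close in $\partition(T)$. More precisely, for every non-complete part other than at most two of them, the path of $\partition(T)$ leading to another non-complete part passes through at most one complete part. We prove this by an exchange argument. Take a path $V_i, W_1, W_2, \dots, V_j$ in $\partition(T)$ whose interior consists of at least two complete parts. Complete parts are pairwise fully adjacent, so the tree can be rerouted: a subtree hanging off $W_2$ is reattached to $W_1$, or a $T$-edge inside a complete part is replaced by another edge between the same parts using a free vertex. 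Each such move keeps a matching and a spanning tree and shortens the distances between non-complete parts. A counting argument on free vertices, in the spirit of \cref{lemma:complete}, guarantees that a replacement vertex exists; the only failures occur at endpoints, and these account for the two exceptional parts. We expect this claim to be the main obstacle, since the rerouting must preserve the matching property.

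\textbf{Branching.} Choose a non-complete part $V_t$ that is not exceptional. By the claim, in some solution $V_t$ is joined to the rest of the non-complete structure either by a direct edge $(u,v)$ to a non-complete part, or through a single complete part $W$ via two edges $(u,w)$ and $(w',v)$. The number of guesses is bounded by choosing the two endpoints $u,v$ ($\le n^2$ choices) and the intermediate complete part together with the attachment data ($\le k^3$ choices). For the intermediate vertices $w,w'$ of $W$, any free vertices suffice because $W$ is complete to every other complete part, so they need not be guessed individually. For each guess we contract the guessed edges using \cref{lemma:contractionOfM}; this requires $\partition(M)$ to be acyclic, which holds for a path. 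The contraction merges $V_t$ into another non-complete part, so the result is $(t-1)$-quasi-complete.

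\textbf{Recursion and correctness.} We then recurse on the contracted graph. Correctness in the forward direction follows from \cref{lemma:contractionOfM} together with the structural claim: some guess matches the good solution $T$, and $T$ minus the guessed edges is an interconnection tree of the contracted graph. The converse direction is immediate from the same lemma. The two exceptional parts are handled at the bottom of the recursion: once only one or two non-complete parts remain, the base case (or a single additional guessing round, absorbed in the exponent $t-1$) decides the instance. Multiplying the branching factor over $t-1$ levels by the base cost gives the claimed running time.
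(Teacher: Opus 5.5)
\paragraph{First gap: the structural claim is false.} Edges between a non-complete part and a complete part are arbitrary, so a non-complete part may be attachable to only one vertex of one complete part. Take $t=3$, $V_i=\{a_i\}$ for $i\le 3$, and complete parts $W_i=\{w_i,x_i,y_i\}$. Let $a_i$ be adjacent only to $w_i$. An interconnection tree exists, for example $\{(a_1,w_1),(a_2,w_2),(a_3,w_3),(x_1,x_2),(y_2,x_3)\}$. In every interconnection tree, however, each $V_i$ is a leaf hanging on $W_i$. So any two non-complete parts are at distance at least $3$ in $\partition(T)$, and all three parts are ``exceptional''.

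For the same reason, $w,w'$ cannot be arbitrary free vertices: $w$ must be a neighbour of $u$, and $w'$ a neighbour of $v$.

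What the paper actually uses is a path $P$ between two non-complete parts, with only complete parts in its interior, that can be shortened to four edges $e_f,e_{in},e_{out},e_l$. The first and last interior parts are dictated by the endpoints' adjacencies. The middle one is an interior part of size larger than $2$, onto which the rest of $P$ is rerouted. When all interior parts have size exactly $2$, the paper instead reroutes the middle of $P$ through a second such path $P'$, which requires $t>2$. Guessing the two endpoint vertices and up to three interior parts is where the factor $n^2k^3$ comes from. In your version, the $k^3$ for a single intermediate part has no justification.

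\paragraph{Second gap: the last level of the recursion.} You dismiss it as ``a single additional guessing round'', but when two non-complete parts remain, both may be exceptional and their distance can be unbounded in every solution. Let $V_1=\{a\}$ and $V_2=\{b\}$, with $a$ adjacent only to $w_1\in W_1=\{w_1,w_1'\}$ and $b$ adjacent only to $w_2\in W_2=\{w_2,w_2'\}$, and add $r$ complete parts of size $2$. Counting vertices ($2r+6$ vertices, $r+4$ parts), every interconnection tree is a perfect matching. This forces $\partition(T)$ to be a path in which $V_1$ and $V_2$ are at distance $r+3$. No bounded guess followed by contraction reaches the case $t=1$ here.

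The paper treats $t=2$ separately. If $V_1$ and $V_2$ can be at distance at most $2$, it guesses the connecting edges, using $O(kn^2)$ calls to the quasi-complete test. Otherwise it merges $V_1$ and $V_2$ and uses an analogue of \cref{lemma:quasicomplete}: both parts must be adjacent to a part of size at least $2$, and $|V|-|V_1\cup V_2|\ge 2k-\mathrm{MaxMatching}(G_{bp})$. The base of your induction needs a matching-based argument of this kind, not more guessing.
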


\begin{proof}

For $t>2$, we reduce \ITP\ to the resolution of several instances of
\ITP\ on $(t-1)$-quasi-complete multipartite graphs. The base case $t=2$ requires a separate argument, similar in spirit to the proof of \cref{corollary:quasicomplete}, but with additional care for a special configuration.

Let $G$ be a $t$-quasi-complete multipartite graph, the non-complete parts are $V_1,\dots,V_t$. Let $T \in \ITof{G}$.  In the tree $\partition(T)$, there are always two non-complete parts connected by a path $P$ whose vertices, except for the two endpoints, are only in complete parts. Moreover, we may assume that only complete parts are connected through $T$ by a path incident to some vertex inside $P$.

Let us consider such $P$, we distinguish several cases according to its length. 
If $P$ consists of up to three edges $e_1,e_2,e_3$, then these edges reduces the problem to \ITP\ on $G_{/\{e_1,e_2,e_3\}}$.

Assume now that $P$ contains more than three edges. There are again two cases. The first edge and last edge of $P$ are called respectively $e_f$ and $e_l$.

If there is a vertex $v \in V_i \cap P$, such that $|V_i|>2$, then we can rearrange the edges of $P$ but $e_f$ and $e_l$ so that we connect all parts incident to $P$ to $V_i$ except for the part connected by $e_f$ and $e_l$. This produces another interconnection tree in which there is a path $P$ with the form
$e_f,e_{in},e_{out},e_l$, where $e_{in}$ and $e_{out}$ are incident to $V_i$.
We can therefore reduce to \ITP\ on $G_{/\{e_f,e_{in},e_{out},e_l\}}$.

Otherwise, all internal parts containing a vertex inside $P$ have size exactly $2$. In this case, no other complete part can be connected to $P$ in $T$. Since $t>2$, there exists another path $P'$ between two parts among
$V_1,\dots,V_t$ with the same property: its internal parts are complete and are not connected by $T$ to non-complete parts.
This path $P'$ has length at least $4$ otherwise the problem can already be solved by the previous reduction. By rerouting the middle edges of $P$ through $P'$, we obtain a new interconnection tree in which the path $P$ has length $2$, again reducing to one of the previous cases.

We now treat the base case $t=2$. 
If $V_1$ and $V_2$ are connected by one or two edges in an interconnection
tree, we can reduce to the case $t=1$ using at most $O(kn^2)$ calls to the
algorithm of Lemma~\ref{lemma:quasicomplete}.

Assume instead that in every interconnection tree, $V_1$ and $V_2$ are at
distance at least $3$ in $\partition(T)$. If we merge $V_1$ and $V_2$ into a
single part and construct the bipartite graph $G_{bp}$ as before, then $T$
induces a matching in this graph. Compared to Lemma~\ref{lemma:quasicomplete},
we now have two additional requirements: the matching must connect both
$V_1$ and $V_2$ to the rest of the graph, and these two connections must lie
in distinct components of the matching. This leads to the same characterization
as in Lemma~\ref{lemma:quasicomplete}, except that both $V_1$ and $V_2$ must
be adjacent to a part of size at least $2$, and the inequality becomes $
\left|V\right| - \left|V_1 \cup V_2\right|
\;\ge\;
2k - \mathrm{MaxMatching}(G_{bp})$.

Finally, we bound the number of recursive calls to instances with $t-1$
arbitrary parts. To determine the path $P$, we choose one vertex as the
start of the path and another as the end, and we select at most three
intermediate parts. Any choice of edges for the path
leads to the same reduced instance, because in the interconnection tree of the proof, these only change vertices inside complete parts which are later only connected to complete parts, and all their vertices play the same role. Hence, we have fewer than $n^2k^3$ calls to the problem on $t-1$-quasi-complete graphs.
\end{proof}


We leave open for future research whether we can find an $\FPT$ algorithm for \ITP\ parametrized by $t$ for $t$-quasi-complete multigraph or if this problem is $\W[1]$-hard.


\section{Enumeration of Interconnection Trees}
\label{ICTEnum}

In this section, we turn to the problem \EITP, which motivates our study of interconnection trees.  
An enumeration problem is inherently harder than the associated decision problem, and thus we cannot hope to solve \EITP\ efficiently on arbitrary graphs. We therefore focus on complete multipartite graphs, which are relevant to our chemoinformatics applications and for which \ITP\ can be solved in linear time. The results presented here extend naturally to multipartite graphs with a bounded number of parts, as well as to $t$-quasi complete multipartite graphs.

\subsection{Binary Partition}

We first present an algorithm based on the classical \textbf{binary partition} method, which consists of splitting the solution space into two subsets: the interconnection trees that contain a given edge, and those that do not.

For an edge $e$, we denote by $G_{\setminus e}$ the multipartite graph obtained from $G$ by removing $e$.
For a set of interconnection trees $IT$ and an edge $e$, we denote by $
IT \oplus e = \{\, T \cup \{e\} \mid T \in IT \,\}$.

By \cref{lemma:contraction}, we have
$
\ITof{G}
=
\ITof{G_{\setminus e}}
\;\sqcup\;
\bigl(\ITof{G_{/e}} \oplus e\bigr)$.
Applying this decomposition recursively yields a backtracking algorithm that enumerates all interconnection trees. To make this approach efficient, we must be able to decide whether $\ITof{G_{\setminus e}}$ and $\ITof{G_{/e}}$ are empty, in order to avoid unnecessary recursive calls. Since this algorithm recurses only on subproblems that admit solutions, its delay is proportional to the depth of the recursion tree multiplied by the cost of a recursive call.

While the algorithm starts from a complete multipartite graph $G$, the graph $G_{\setminus e}$ is no longer complete, and deciding whether $\ITof{G_{\setminus e}}$ is empty is $\NP$-complete by \cref{th:ITisNP}.

A way to overcome this difficulty is to restrict the choice of the edge $e$ so that the decision problem remains tractable. Assume now that the algorithm always choose $e$ to be incident to the part $V_1$. By a simple induction, if $G$ is quasi-complete, then both $G_{\setminus e}$ and $G_{/e}$ remain quasi-complete. We can test the existence of an interconnection tree in a quasi-complete multipartite graph in time $O(m\sqrt{n})$ by \cref{corollary:quasicomplete}. Since the depth of the recursion tree is bounded by the size of an interconnection tree, the delay of the enumeration algorithm is $O(km\sqrt{n})$.

\subsection{Unbounded Branching}

Since a complete multipartite graph may contain an exponential number of interconnection trees (see \cref{th:counting}), reducing the delay is crucial.
We still rely on a flashlight search, but we design the branching choices carefully so that each recursive step reduces the problem to simpler instances of the same form, thereby ensuring optimal complexity bounds.

We refine the decomposition of $\ITof{G}$ into a finer partition.
Let $<$ be an arbitrary total order on the vertices of $G$. 
For a vertex $u$, we denote by $G_{\geq u}$ (resp. $G_{> u}$) the graph obtained from $G$ by removing all vertices $u' < u$ (resp. $u'  \le u$) such that $\partition(u)=\partition(u')$.
We decompose the space of interconnection trees by contracting all edges incident to vertices of $V_1$, in an order compatible with $<$.

\begin{lemma}\label{lemma:large_decomposition}
Let $G$ be a multipartite graph, then
\begin{equation}
\ITof{G}
=
\displaystyle
\bigsqcup_{u \in V_1}
\;\bigsqcup_{(u,v)\in E}
\ITof{(G_{\geq u}}_{/(u,v)}) \oplus (u,v).
\label{eq:large_union}
\end{equation}
If $G$ is complete multipartite, then every graph $(G_{/(u,v)})_{>u}$ is also complete multipartite.
\end{lemma}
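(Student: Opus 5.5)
The plan is to classify each interconnection tree by the smallest vertex of $V_1$ (for $<$) that it covers. Then Lemma~\ref{lemma:contraction} identifies the trees in each class with the interconnection trees of a contracted graph. We assume $k\ge 2$; for $k=1$ we have $\ITof{G}=\{\emptyset\}$, which is handled separately.

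First, every $T\in\ITof{G}$ uses at least one vertex of $V_1$, because $\partition(T)$ is a spanning tree on $k\ge 2$ vertices, so part $1$ has degree at least one. Let $u$ be the $<$-minimal vertex of $V_1$ covered by $T$, and let $(u,v)$ be the unique edge of $T$ at $u$; it is unique since $T$ is a matching. No vertex $u'<u$ of $V_1$ is covered by $T$, so $T\subseteq E(G_{\ge u})$. The parts of $G_{\ge u}$ are those of $G$, with $V_1$ shrunk but still containing $u$, so $T\in\ITof{G_{\ge u}}$. Applying Lemma~\ref{lemma:contraction} in $G_{\ge u}$ then gives $T\setminus\{(u,v)\}\in\ITof{(G_{\ge u})_{/(u,v)}}$. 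This shows the inclusion $\subseteq$.

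For the converse, take $T'\in\ITof{(G_{\ge u})_{/(u,v)}}$. Lemma~\ref{lemma:contraction} gives $T'\cup\{(u,v)\}\in\ITof{G_{\ge u}}$. Since $G_{\ge u}$ is a subgraph of $G$ with the same index set of parts, this set is still a matching of $G$. Its projection is still a spanning tree of $G_{\partition}$, because the quotient graph of $G_{\ge u}$ is a subgraph of $G_{\partition}$ on the same vertex set $[k]$. So the union is an element of $\ITof{G}$.

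Disjointness of the union follows because the pair $(u,v)$ can be read off from the resulting tree $T$. The vertex $u$ is the smallest vertex of $V_1$ covered by $T$: the only vertices of $V_1$ available in $G_{\ge u}$ are those $\ge u$, and $u$ itself is covered. The vertex $v$ is then the partner of $u$ in $T$. Hence distinct pairs $(u,v)$ give disjoint families.

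The main technical care is the side condition in Definition~\ref{definition:contraction}, which requires $|V_1\cap V(G_{\ge u})|\ne 1$ or $|V_j|\ne 1$. When both sides are singletons, the merged part would be empty. This can only yield a solution when $k=2$, where the term is just $\{(u,v)\}$, so I will state that degenerate case as a convention. In all other cases that term contributes nothing: any tree containing $(u,v)$ would have to connect the merged parts to the rest through vertices that no longer exist.

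For the second claim, let $G$ be complete multipartite and $(u,v)$ with $v\in V_j$. After contraction, every remaining vertex of $V_1\setminus\{u\}$ is adjacent to every vertex outside $V_1\cup V_j$, and likewise for $V_j\setminus\{v\}$. The only deleted edges are those between $V_1$ and $V_j$, which now lie inside the merged part. So $G_{/(u,v)}$ is complete multipartite. Removing the vertices of $V_1$ that are $\le u$ only deletes vertices inside one part, and therefore preserves completeness as long as the part stays nonempty; this is guaranteed under the same side condition. Finally, note that this graph coincides with $(G_{\ge u})_{/(u,v)}$, which ties the statement to the decomposition~\eqref{eq:large_union}.
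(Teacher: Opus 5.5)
Your proof is correct and is essentially the paper's argument. The paper peels off the smallest vertex $u$ of $V_1$ with Lemma~\ref{lemma:contraction}, splitting into trees that use $u$ and $\ITof{G_{>u}}$, and iterates; this unrolls exactly into your classification by the smallest covered vertex of $V_1$, and the completeness claim is argued the same way. Your version is also more careful than the paper's on three points: it treats $k=1$ separately (the identity as written fails there), it handles the side condition of Definition~\ref{definition:contraction}, and it identifies $(G_{/(u,v)})_{>u}$ with $(G_{\geq u})_{/(u,v)}$.
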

\begin{proof}
Let $u$ be the smallest vertex of $V_1$ in the order $<$. Applying \cref{lemma:contraction} to all edges $(u,v)$ yields
\[
\ITof{G}
=
\left(
\bigsqcup_{(u,v)\in E}
\ITof{G_{/(u,v)}} \oplus (u,v)
\right)
\sqcup
\ITof{G_{\setminus \{(u,v)\in E\}}}.
\]

Since no edge incident to $u$ remains in $G_{\setminus \{(u,v)\in E\}}$, we have
\(
\ITof{G_{\setminus \{(u,v)\in E\}}} = \ITof{G_{>u}}.
\)
Hence,
\[
\ITof{G}
=
\left(
\bigsqcup_{(u,v)\in E}
\ITof{G_{/(u,v)}} \oplus (u,v)
\right)
\sqcup
\ITof{G_{>u}}.
\]

Applying this identity to $\ITof{G_{>u}}$ iteratively for each vertex of $V_1$ in increasing order, and letting $u_\ell$ be the largest vertex of $V_1$, we obtain
\[
\ITof{G}
=
\left(
\displaystyle
\bigsqcup_{u\in V_1}
\bigsqcup_{(u,v)\in E}
\ITof{(G_{\geq u})_{/(u,v)}} \oplus (u,v)
\right)
\sqcup
\ITof{G_{>u_\ell}}.
\]

Since $V_1$ is empty in $G_{>u_\ell}$, $\ITof{G_{>u_\ell}} = \emptyset$, proving \cref{eq:large_union}.

Finally, contraction and vertex removal do not remove edges between two vertices of different parts, hence $(G_{/(u,v)})_{>u}$ remains complete multipartite.
\end{proof}

\begin{theorem}
\EITP\ can be solved on complete multipartite graphs with worst-case delay $O(k)$ and amortized delay $O(1)$, where $k$ is the number of parts.
\end{theorem}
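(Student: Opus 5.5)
We build a flashlight search on the decomposition of \cref{lemma:large_decomposition}. A recursive call receives a complete multipartite graph $G$ with $k$ parts. If $k=1$, it outputs the current partial solution, since the empty set is an interconnection tree. Otherwise, for $u\in V_1$ in increasing order and each neighbor $v$ of $u$, it considers the branch $(G_{\geq u})_{/(u,v)}$, adds $(u,v)$ to the partial solution, and recurses. By \cref{eq:large_union} the branches are disjoint and their union is exactly $\ITof{G}$, so every interconnection tree is output exactly once. Each contracted graph is complete multipartite, and by \cref{lemma:complete} it has a solution if and only if its vertex count is at least $2(k'-1)$, where $k'$ is its number of parts. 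This is a constant-time test if we maintain $|V|$, $k$ and the part sizes. We also check that the merged part is non-empty, which is the condition of \cref{definition:contraction}; the count condition already forces this, except that one must be careful when both $|V_1|$ and $|V_j|$ are $1$.

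The key observation is that, since $G$ is complete, feasibility of a branch depends only on counts. After removing the vertices of $V_1$ below $u$ and contracting, the new graph has $|V|-r-2$ vertices and $k-1$ parts, where $r$ is the number of removed vertices. Hence every $v$ works for a given $u$, or none does, and the feasible $u$'s form a prefix of $V_1$ in the order $<$. The algorithm can therefore skip empty branches and stop scanning as soon as the test fails, so every recursive call it makes produces at least one solution. I would not copy graphs. Instead the instance is represented implicitly: each part is kept as a doubly linked list of its alive vertices, and contraction becomes merging two lists and deleting $u$ and $v$ in $O(1)$. A contraction and its undo then cost $O(1)$, and so does advancing to the next $(u,v)$.

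For the delay, the recursion depth is $k-1$, since each level removes one part and adds one edge. Every node of the recursion tree has at least one solution in its subtree, and moving between consecutive leaves costs $O(1)$ per level traversed. Consecutive outputs are therefore separated by $O(k)$ work, which gives the worst-case delay $O(k)$. Outputting the solution as a difference from the previous one keeps this bound. If the whole solution had to be written, the delay would still be $O(k)$, which is exactly the size of a solution.

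For the amortized bound, the total work is proportional to the number of nodes of the recursion tree plus the failed feasibility tests. Each node makes at most one failed test, where it stops scanning, so that term is bounded by the number of nodes. The main obstacle is bounding the number of nodes by $O(|\ITof{G}|)$. The argument I would try is that internal nodes at depth $\ell$ with $k-1-\ell\geq 2$ remaining edges have at least $2$ children: once $u$ is fixed, $v$ can be chosen in at least two parts, since $k\ge 3$ and the branch is feasible. A tree whose internal nodes all branch at least twice has fewer internal nodes than leaves. The exception is the last level, $k=2$, which could be a chain; there I would charge the parent node to its single leaf, so each leaf receives only $O(1)$ charge. This yields $O(1)$ amortized time per solution. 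If the branching argument fails for some degenerate part sizes, the fallback is to charge each node to the leftmost leaf of its subtree and show, via the counting formula of \cref{th:counting}, that the paths are short on average.
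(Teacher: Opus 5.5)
\paragraph{Comparison with the paper.} Your algorithm and your worst-case delay argument are the paper's. Both use the decomposition of \cref{lemma:large_decomposition}, a constant-time test from \cref{lemma:complete} kept up to date through counts, linked lists giving $O(1)$ per transition, and recursion depth at most $k$.

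The amortized analysis takes a different route. The paper fixes $V_1$ to be the largest part and asserts, without proof, that for $l>2$ parts and $|V_1|\ge 2$ every child satisfies $|\ITof{G}|\ge 2|\ITof{(G_{/(u,v)})_{>u}}|$. It infers that a call on $l$ parts outputs exponentially many trees in $l$. It then spreads each call's constant cost over its outputs, so every tree pays at most a geometric series $\sum_l C2^{-l}\le 2C$.

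You only use a local fact: a feasible node with at least three parts has at least two feasible children. Hence the recursion tree cut at the two-part level has fewer internal nodes than leaves. Charging the possible chains at the last level to their unique leaf bounds the number of nodes by $3|\ITof{G}|$. This is more elementary and needs no halving inequality. It also works for any rule choosing $V_1$, whereas the paper needs the largest part, which sits awkwardly with its own $O(1)$-update representation where the merged part becomes the next $V_1$. The paper's scheme, in exchange, bounds the charge borne by each individual solution. Your leftmost-leaf fallback is unnecessary.

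\paragraph{A step that needs repair.} Your claim that ``for a given $u$, every $v$ works or none does'' is false exactly in the case you flagged. Take $k\ge 3$ and $u$ the last remaining vertex of $V_1$. Then $(u,v)$ fails when $v$ lies in a singleton part, because the merged part is empty, even though the count test passes. It may succeed when $v$ lies in a larger part.

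For instance, let $V_1=\{u_1,u_2\}$, $V_2=\{a\}$, $V_3=\{b,c\}$ and $u=u_2$. The branch $(u_2,a)$ fails, while $(u_2,b)$ extends to $\{(u_2,b),(c,a)\}$. So ``stop scanning at the first failure'' can lose solutions, and testing every $v$ instead breaks your one-failed-test-per-node accounting.

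The fix is cheap:
\begin{itemize}
\item keep singleton and non-singleton parts in two lists, updated in $O(1)$ per contraction;
\item scan the vertices of non-singleton parts first, and for the last $u$ stop at the first singleton part;
\item for $k=2$, accept the empty merged part as a solution rather than rejecting it.
\end{itemize}

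Your two-children claim should then read ``at least two choices of $v$'' rather than ``two parts''. Either $u$ is not last, and every $v$ in the at least two other parts works. Or $u$ is last, and a feasible $v$ lies in a part of size at least $2$, all of whose vertices work. The paper's proof does not address this case either.
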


\begin{proof}
The algorithm follows the recursive structure given by \cref{eq:large_union}. For each graph that appears in this decomposition, we test in linear time whether it admits an interconnection tree using \cref{corollary:complete}, since all such graphs are complete multipartite by \cref{lemma:large_decomposition}. The algorithm then recurses on the graphs which admits an interconnection tree.

The depth of the recursion tree is at most $k$, as each recursive call decreases the number of parts by one. A naive implementation would therefore yield a delay in $O(nk)$, where $n$ is the number of vertices.

This delay can be improved by maintaining suitable data structures and by processing the unions in \cref{eq:large_union} in increasing order of $u$. The criterion of \cref{lemma:complete}, used to decide whether a complete multipartite graph admits an interconnection tree, can be updated in constant time by maintaining the current number of parts and vertices.

Indeed, when considering a graph of the form $(G_{/(u,v)})_{>u}$, the number of parts decreases by one due to the contraction. The number of vertices decreases by two because of the contraction, and then decreases by one each time we move to a contraction involving a new vertex $u$, since the previous vertex is removed.

Moreover, each graph $(G_{/(u,v)})_{>u}$ can be obtained from the previous one in constant time. We represent $V_1$ as a list of original parts, and each original part as a linked list of its vertices to allow constant-time insertions and deletions. For each vertex, we store a pointer to the list representing its part and its position in this list.

Transitioning from $(G_{/(u,v')})_{>u}$ to $(G_{/(u,v)})_{>u}$ consists of restoring $v'$ to its part, removing $v$, and updating the representation of $V_1$ accordingly. Similar constant-time updates apply when moving from $(G_{/(u,v)})_{>u}$ to $(G_{/(u',v')})_{>u'}$. Backtracking operations are the exact reverse and also take constant time. Hence, each recursive step costs $O(1)$, and the worst-case delay is $O(k)$.

For the amortized analysis, we assume that the algorithm always chooses $V_1$ to be the largest part. We charge the time required to compute the graphs appearing in \cref{eq:large_union} from $G$, as well as the time needed to detect the first empty set $\ITof{(G_{/(u,v)})_{>u}}$, evenly among all recursive calls that correspond to non-empty sets $\ITof{(G_{/(u,v)})_{>u}}$. By the previous analysis, this cost per recursive call is constant; we denote it by $C$.

If $|V_1|=1$ and there are more than two parts, then no interconnection tree exists. Otherwise, when $|V_1|=1$ and there are at most two parts, the unique interconnection tree is produced in constant time, therefore the case with more than two parts and $|V_1|>1$ is the only relevant for analyzing the amortized delay.

Assume now that the number of parts is $l>2$ and $|V_1|\ge 2$, then for any edge $(u,v)$ we have $|\ITof{G}| \;\ge\; 2\,\bigl|\ITof{(G_{/(u,v)})_{>u}}\bigr|.$ Hence, by a simple induction, when the algorithm is called on such a graph, it eventually produces at least $2^{\,l}$ interconnection trees.
If we charge the cost $C$ of such a recursive call evenly to all the interconnection trees it produces, each tree is charged at most $C\,2^{-l}$.

An interconnection tree can be charged by recursive calls at most once per level of the recursion tree. Summing over all levels, we obtain a total charge for a tree of at most $\sum_{l\le k-1} C\,2^{-l} \;\le\; 2C$. Therefore, the amortized delay is constant.
\end{proof}

Since this algorithm is based on flashlight search, generic techniques can transform amortized delay into worst-case delay~\cite{uno03,capelli2021amortized}. However, a delay linear in the size of the output is essentially optimal, unless one allows the algorithm to output \emph{deltas} between consecutive solutions instead of full solutions. In this setting, it may be possible to achieve constant worst-case delay, but the techniques of~\cite{uno03,capelli2021amortized} would first need to be adapted to handle this representation. Another possible improvement would be to design a Gray code for interconnection trees, in which consecutive solutions differ by only a constant number of edges, as is known for spanning trees in complete graphs~\cite{liu2026generating} and general graphs~\cite{merino2024traversing}.

\section{Weighted Enumeration of Interconnection Trees}
\label{WICTEnum}
In this section, we consider a weighted variant of interconnection trees in multipartite graphs. Given an integer weight function $w$ on the edges of a multipartite graph, the weight of an interconnection tree $T$ is defined as $w(T) = \sum_{e \in T} w(e)$.

In our application to molecular graphs, this weight estimates the total length of the molecular paths that must be constructed according to the interconnection tree. Consequently, trees of small weight are preferred, and we would like to enumerate interconnection trees in increasing order of weight.

Formally, given a weighted multipartite graph $G$, we consider the problem of listing all elements of $\ITof{G}$ in nondecreasing order of weight. This problem is computationally difficult, since even finding a minimum-weight interconnection tree is hard.

This difficulty already arises in the Euclidean graphs encountered in our applications, where each vertex has coordinates in $\mathbb{N}^3$ and the weight of an edge is the Euclidean distance between its endpoints. Indeed, the Euclidean Traveling Salesperson Problem, which is $\NP$-complete~\cite{DBLP:journals/tcs/Papadimitriou77}, reduces to finding a minimum-weight interconnection tree using the construction of the proof of \cref{th:ITisNP}.

\begin{theorem}
Finding a minimum-weight interconnection tree in a Euclidean complete multipartite graph is $\NP$-hard.
\end{theorem}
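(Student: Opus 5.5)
The plan is to reuse the construction from the proof of \cref{th:ITisNP}, taking the Euclidean Hamiltonian path problem (equivalently, the path version of Euclidean TSP, which Papadimitriou shows to be $\NP$-complete) as the source problem. Let $X=\{x_1,\dots,x_n\}\subseteq\mathbb{N}^3$ be a set of points together with a bound $B$. We ask whether there is a Hamiltonian path through $X$ of Euclidean length at most $B$. We build a complete multipartite graph $G$ with one part $V_i=\{x_i^{\text{in}},x_i^{\text{out}}\}$ per point. Both vertices of $V_i$ are placed at (or very near) the coordinates of $x_i$. All edges between distinct parts are present, and each is weighted by the Euclidean distance between its endpoints. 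The construction is clearly polynomial.

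\textbf{Step 1: an interconnection tree gives a Hamiltonian path.} As in the proof of \cref{th:ITisNP}, every part has exactly two vertices and $T$ is a matching. Hence every vertex of $\partition(T)$ has degree at most two, so $\partition(T)$ is a Hamiltonian path $x_{\sigma(1)},\dots,x_{\sigma(n)}$ of the quotient graph $K_n$. If both vertices of a part sit exactly at $x_i$, the weight of the edge of $T$ projecting to $(\sigma(j),\sigma(j+1))$ is exactly $\|x_{\sigma(j)}-x_{\sigma(j+1)}\|$. Summing over $j$, $w(T)$ equals the length of this tour path.

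\textbf{Step 2: a Hamiltonian path gives an interconnection tree of equal weight.} Given a path $x_{\sigma(1)},\dots,x_{\sigma(n)}$, take
\[
T=\{(x_{\sigma(j)}^{\text{out}},x_{\sigma(j+1)}^{\text{in}})\mid 1\le j\le n-1\}.
\]
This is a matching, it projects to the path, and its weight equals the path length. Unlike in \cref{th:ITisNP}, no compatibility constraint is needed, because $G$ is complete multipartite. Together with Step 1, the minimum weight of an interconnection tree equals the minimum Hamiltonian path length. Comparing this minimum with $B$ therefore decides the source instance.

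\textbf{Main obstacle: distinct integer coordinates.} The subtle point is whether the model requires distinct vertices to have distinct coordinates in $\mathbb{N}^3$. If it does, I would first scale all points by a large integer $L$. Then I would place $x_i^{\text{in}}$ at $Lx_i$ and $x_i^{\text{out}}$ at $Lx_i+(1,0,0)$. Each edge weight then differs from $L$ times the original distance by at most $1$. So any interconnection tree's weight differs from $L$ times the length of the corresponding path by at most $n-1$.

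It then remains to choose $L$ so that this additive error of $n-1$ cannot change the answer to the threshold question. A sufficient condition is that $L$ times the gap between $B$ and the nearest achievable path length exceeds $n-1$. This gap is a difference between a rational number and a sum of square roots of integers, which is the delicate part. I would avoid separation bounds entirely by relying on Papadimitriou's reduction, whose instances have path lengths that are either at most $B$ or at least $B+\delta$ for an explicit, polynomially bounded $\delta$ (his grid-based construction gives such a gap). Taking $L>(n-1)/\delta$ then finishes the proof.
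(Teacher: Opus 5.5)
Your proposal is correct and takes the paper's route. The paper justifies the theorem only by saying that Euclidean TSP reduces via the two-vertices-per-part construction of Theorem~\ref{th:ITisNP}, and your Steps~1--2 already suffice under the paper's model, which does not require distinct coordinates; using the path version is the right choice, since that is what the construction encodes. In the optional scaling step, your stated error bound only separates the two cases when $L\delta>2(n-1)$, although $L\delta>n-1$ suffices once you observe that every Hamiltonian path is realised by an interconnection tree alternating out--out and in--in edges, whose weight is then exactly $L$ times the path length; also, Hamiltonian path in grid graphs, with target $n-1$ and gap $\delta=\sqrt{2}-1$, is a cleaner source of the required gap than Papadimitriou's construction.
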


Although exact enumeration in nondecreasing order of weight is intractable in general,
we propose a heuristic that tends to output low-weight trees early, without guaranteeing
a strict ordering. 

We rely on a variant of Equation~\ref{eq:large_union}. Let $G$ be a weighted
quasi-multipartite graph. For a value $val$, we denote by $G_{> val}$ the graph obtained
from $G$ by removing all edges incident to $V_1$ whose weight is at most $val$.
This graph remains quasi-complete multipartite.

We obtain the following decomposition:
\[
\ITof{G}
=
\displaystyle
\bigsqcup_{\substack{(u,v)\in E \\ u\in V_1}}
\ITof{(G_{/(u,v)})_{> w(u,v)}} \oplus (u,v).
\]

Since all graphs $(G_{/(u,v)})_{> w(u,v)}$ are quasi-complete, we can test in polynomial time whether $\ITof{(G_{/(u,v)})_{> w(u,v)}}=\emptyset$ using \cref{corollary:quasicomplete}. This yields a flashlight-search algorithm in which the recursive calls are processed in increasing order of the weights of the edges incident to $V_1$. We call this algorithm \emph{Weight-Guided Enumeration} (\textsc{WGE}).

The algorithm greedily selects the lightest edges first, which tends to produce interconnection trees of small total weight early in the enumeration. In particular, the trees are produced in nondecreasing order of weight in the special cases where the graph has only two parts, or when all complete parts have cardinality~$1$. 
However, there is no general guarantee on the order: it is easy to construct instances where an interconnection tree output early by the algorithm has arbitrarily larger weight than a minimum-weight one which is produced at the end of the enumeration.

A key difference between \textsc{WGE} and the algorithm of \cref{ICTEnum}, is that it removes all edges incident to $V_1$ below a given threshold of weight. As a result, the recursive calls are on quasi-complete graphs rather than complete graphs. Consequently, the delay increases to $O(km\sqrt{n})$ instead of $O(k)$.

We may also consider alternative weight measures, such as
\[
w_{\max}(T) = \max_{e \in T} w(e)
\qquad\text{or}\qquad
w_2(T) = \sqrt{\sum_{e \in T} w(e)^2},
\]
which are also relevant in our application. Using $w$ or $w_2$ does not affect the
complexity of the problems of finding the smallest interconnection tree, since they induce the same ordering on trees up to squaring the weights.

Finding a minimum interconnection tree with respect to $w_{\max}$ is also hard. Indeed,
we can simulate \ITP\ on arbitrary graphs, which is $\NP$-complete by
Theorem~\ref{th:ITisNP}. Given a graph $G$, construct a complete multipartite graph $G'$
with the same vertices and parts, assign weight $0$ to the edges of $G$, and weight $1$
to all other edges. Then $G$ has an interconnection tree if and only if $G'$ has an
interconnection tree of $w_{\max}$-weight $0$.

\subsection{Experimental Evaluation}

We experimentally compare \textsc{WGE} to the algorithm of \cref{ICTEnum} without any weight-aware ordering that we call \emph{Baseline Enumeration} (\textsc{BE}).
To assess the quality of the output order, we compute two classical measures of sortedness~\cite{DBLP:journals/csur/Estivill-CastroW92}: the number of \emph{inversions} (pairs of solutions enumerated in the wrong order) and the number of \emph{increasing runs} (maximal sequences of consecutively enumerated trees with nondecreasing weight).

We consider both synthetic and real-world instances. The synthetic instance (SYNTHETIC) consists of $18$ vertices partitioned into six parts $(5,4,3,3,2,1)$; we generate $50$ variants with fixed topology, and in each variant edge weights are sampled uniformly between $0.1$ and $10.0$. The real-world graphs are derived from binding motifs~\cite{demange2026computational}, based on structures from the Cambridge Structural Database (CSD)~\cite{CSD}. Vertices correspond to atoms with 3D coordinates and edge weights are Euclidean distances. The instances range from $4$ to $13$ parts and from $13$ to $33$ vertices, with $2$ to $7$ vertices per part. The source code and datasets used in this work are available in our GitHub repository\footnote{\url{https://github.com/NoeDemange/InterconnectionTreesEnumeration}}.

Before comparing \textsc{WGE} with the baseline enumeration strategy, we first investigate whether its performance depends on the heuristic used to select the main part at the beginning of the enumeration process.
We evaluate three heuristics:
\begin{itemize}
    \item \emph{MaxV}, selecting the part with the largest number of vertices.
    \item \emph{MinAvg}, selecting the part minimizing the average incident edge weight.
    \item \emph{MinEdge}, selecting the part incident to the currently lightest edge.
\end{itemize}

To compare these heuristics, we measure both the quality of the enumeration order and the quality of the early generated solutions.
For the ordering quality, we use two normalized measures derived from classical sortedness metrics:
the normalized inversion ratio
$\frac{2\,\#\mathrm{inversion}}{n(n-1)}$,
and the normalized run ratio
$\frac{\#\mathrm{runs}-1}{n-1}$,
where $n$ denotes the number of enumerated solutions.
Both measures belong to $[0,1]$, and lower values indicate an ordering closer to a nondecreasing enumeration by weight.

Table~\ref{tab:part-selection} reports the results.
The left table presents the normalized ordering measures, while the right table reports the mean weight of the first $10\,000$ generated interconnection trees.
Overall, the three heuristics exhibit relatively similar behavior.
However, \emph{MinEdge} generally achieves slightly better ordering measures and tends to produce lower-weight solutions earlier in the enumeration.
For this reason, we use \emph{MinEdge} in the remainder of the experiments.

\begin{table}[ht]
\centering
\setlength{\tabcolsep}{2pt}
\scriptsize
\caption{Impact of part-selection heuristics on enumeration quality.
Left: normalized inversion (NI) and run (NR) measures.
Right: mean weight of the first $10\,000$ generated interconnection trees.}
\label{tab:part-selection}

\begin{subtable}[t]{0.49\textwidth}
\vspace{0pt}
\centering

\begin{tabular}{lcc|cc|cc}
 &
\multicolumn{2}{c|}{\textbf{MaxV}} &
\multicolumn{2}{c|}{\textbf{MinAvg}} &
\multicolumn{2}{c}{\textbf{MinEdge}} \\

\textbf{Instance} &
\textbf{NI} &
\textbf{NR} &
\textbf{NI} &
\textbf{NR} &
\textbf{NI} &
\textbf{NR} \\
\hline
\T
ABABEL   & 0.32 & 0.04 & 0.32 & 0.04 & 0.32 & 0.04 \\
ACANIL01 & 0.30 & 0.09 & 0.31 & 0.09 & 0.28 & 0.06 \\
ADENOS10 & ---  & ---  & ---  & ---  & ---  & ---  \\
ALFUCO   & 0.27 & 0.18 & 0.36 & 0.15 & 0.27 & 0.18 \\
CUGDIR   & 0.34 & 0.26 & 0.34 & 0.26 & 0.34 & 0.26 \\
DIGOXN10 & ---  & ---  & ---  & ---  & ---  & ---  \\
FAZHET01 & 0.32 & 0.18 & 0.33 & 0.17 & 0.30 & 0.15 \\
\hline\hline
\T
SYNTHETIC & 0.32 & 0.18 & 0.34 & 0.15 & 0.33 & 0.17 \\
\hline
\end{tabular}
\end{subtable}
\hfill
\begin{subtable}[t]{0.49\textwidth}
\vspace{0pt}
\centering

\begin{tabular}{lc|c|c}

\multicolumn{4}{c}{}\\

\textbf{Instance} &
\textbf{MaxV} &
\textbf{MinAvg} &
\textbf{MinEdge} \\
\hline
\T
ABABEL   & 17.73 & 17.73 & 17.73 \\
ACANIL01 & 18.58 & 18.66 & 18.53 \\
ADENOS10 & 35.63 & 35.63 & 34.25 \\
ALFUCO   & 24.93 & 25.72 & 24.93 \\
CUGDIR   & 61.19 & 61.19 & 61.19 \\
DIGOXN10 & 91.61 & 91.61 & 91.61 \\
FAZHET01 & 27.91 & 28.99 & 27.91 \\
\hline\hline
\T
SYNTHETIC & 15.53 & 15.85 & 15.51 \\
\hline
\end{tabular}

\end{subtable}

\end{table}

We now compare the \textsc{WGE} approach, with the baseline enumeration strategy (\textsc{BE}) that does not incorporate any weight-aware ordering.
Table~\ref{tab:combined} summarizes the results. On the left, we report the ratios (\textsc{WGE}/\textsc{BE}) for inversions and runs: values below $1$ indicate an improvement of the weight-guided strategy. Missing values correspond to instances for which full enumeration is not tractable. On the right, we report the mean weight of the first $10\,000$ enumerated solutions for both methods. This reflects the practical scenario where only the first solutions are generated, making the quality of early outputs particularly important.

Overall, the weight-guided strategy consistently improves the ordering quality and yields lower-weight solutions earlier in the enumeration.

\begin{table}[ht]
\centering
\scriptsize
\caption{Impact of ordering on enumeration quality: ratios of inversions and runs (left) and mean weights for the first $10\,000$ interconnection trees produced by the two algorithms (right).}
\label{tab:combined}

\begin{subtable}[t]{0.48\textwidth}
\vspace{0pt}
\centering
\begin{tabular}{lll}
\textbf{Instance} &
\textbf{Inv. ratio} &
\textbf{Run ratio} \\
\hline
\T
ABABEL   & 0.706 & 0.075 \\
ACANIL01 & 0.636 & 0.112 \\
ADENOS10 & ---   & ---   \\
ALFUCO   & 0.656 & 0.353 \\
CUGDIR   & 0.771 & 0.550 \\
DIGOXN10 & ---   & ---   \\
FAZHET01 & 0.602 & 0.282 \\
\hline\hline
\T
SYNTHETIC & 0.654 & 0.343 \\
\hline
\end{tabular}
\end{subtable}
\hfill
\begin{subtable}[t]{0.48\textwidth}
\vspace{0pt}
\centering
\begin{tabular}{lcc}
\textbf{Instance} &
\textbf{\textsc{BE}} &
\textbf{\textsc{WGE}} \\
\hline
\T
ABABEL    & 25.78 & 17.73 \\
ACANIL01  & 19.49 & 18.53 \\
ADENOS10  & 52.37 & 34.25 \\
ALFUCO    & 26.47 & 24.93 \\
CUGDIR    & 69.79 & 61.19 \\
DIGOXN10  & 139.42 & 91.61 \\
FAZHET01  & 37.50 & 27.91 \\
\hline\hline
\T
SYNTHETIC & 25.36 & 15.52 \\
\hline
\end{tabular}
\end{subtable}

\end{table}

\newpage 
\bibliography{lipics-v2021-article}

\end{document}